\newlength\myindent
\newcommand{\trans}{^{\mathsf{T}}}
\DeclareMathOperator{\argmin}{argmin}
\newtheorem{Proposition}{Proposition}
\newtheorem{Remark}{Remark}
\begin{document}

\title{Accelerating Federated Edge Learning via \\Topology Optimization}

\author{
    Shanfeng Huang,~\IEEEmembership{Graduate Student Member,~IEEE}, Zezhong Zhang,~\IEEEmembership{Member,~IEEE}, Shuai Wang,~\IEEEmembership{Member,~IEEE}, Rui Wang,~\IEEEmembership{Member,~IEEE},  and Kaibin Huang,~\IEEEmembership{Fellow,~IEEE}

    \thanks{	
    S. Huang is with the Department of Electrical and Electronic Engineering, The University of Hong Kong, Hong Kong, and also with the Department of Electronic and Electrical Engineering, Southern University of Science and Technology, Shenzhen 518055, China (e-mail: sfhuang@eee.hku.hk).
    
    Zezhong Zhang is with The Future Network of Intelligence Institute (FNii), The Chinese University of Hong Kong (Shenzhen), Shenzhen 518172, China (email: zhangzezhong@cuhk.edu.cn).

    Shuai Wang is with the Shenzhen Institute of Advanced Technology (SIAT), Chinese Academy of Sciences, Shenzhen 518055, China (e-mail: s.wang@siat.ac.cn).
	
    R. Wang is with the Department of Electrical and Electronic Engineering, Southern University of Science and Technology, Shenzhen, China, and also with the Research Center of Networks and Communications, Peng Cheng Laboratory, Shenzhen, China (e-mail: wang.r@sustech.edu.cn).

	K. Huang is with the Department of Electrical and Electronic Engineering, The University of Hong Kong, Hong Kong (e-mail: huangkb@eee.hku.hk).

    {Corresponding author: Z. Zhang and R. Wang.}
}
}

\maketitle

\begin{abstract}
    Federated edge learning (FEEL) is envisioned as a promising paradigm to achieve privacy-preserving distributed learning. However, it consumes excessive learning time due to the existence of straggler devices. In this paper, a novel topology-optimized federated edge learning (TOFEL) scheme is proposed to tackle the heterogeneity issue in federated learning and to improve the communication-and-computation efficiency. Specifically, a problem of jointly optimizing the aggregation topology and computing speed is formulated to minimize the weighted summation of energy consumption and latency. To solve the mixed-integer nonlinear problem, we propose a novel solution method of penalty-based successive convex approximation, which converges to a stationary point of the primal problem under mild conditions. To facilitate real-time decision making, an imitation-learning based method is developed, where deep neural networks (DNNs) are trained offline to mimic the penalty-based method, and the trained imitation DNNs are  deployed at the edge devices for online inference. Thereby, an efficient  imitate-learning based approach is seamlessly integrated into the TOFEL framework. Simulation results demonstrate that the proposed TOFEL scheme accelerates the federated learning process, and achieves a higher energy efficiency. Moreover, we apply the scheme to 3D object detection with multi-vehicle point cloud datasets in the CARLA simulator. The results confirm the superior learning performance of the TOFEL scheme over conventional designs with the same resource and deadline constraints.
\end{abstract}

\begin{IEEEkeywords}
	Federated edge learning, topology optimization, penalty-based method, imitation learning
\end{IEEEkeywords}

\section{Introduction}
Recent years have witnessed unprecedented successes of deep-learning based artificial intelligence (AI) in a wide range of applications, such as speech recognition, image classification, autonomous driving. On the other hand, the massive Internet of Things (IoT) devices and mobile terminals generate a vast amount of data that can be employed for AI model training. However, sending these massive data to the central servers causes concerns on privacy. To address the issue, federated learning has emerged as a promising paradigm to achieve privacy-preserving distributed learning such that the original datasets are kept in their generated devices and only the neural network model parameters are shared \cite{jakub2016FL,CX,GuiGuan1}.

\subsection{Federated Edge Learning}
Although federated learning was originally proposed for the systems with wired connections, many intelligent systems are implemented with wireless links, such as IoT smart surveillance and vehicle-to-everything (V2X) autonomous driving. This results in a new research area called federated edge learning (FEEL) which concerns the implementation of federated learning in the wireless networks \cite{Zhu2019BroadbandAggr,chen2021joint,Du2020SGquant,guo2021analoggrad,amiri2020edgelearn}.
In a FEEL framework, each round of the iterative learning process involves the broadcasting of a global model to edge devices, the uploading of local gradients calculated from the locally stored datasets at the edge devices, as well as the aggregation of the local gradients and global model update at the edge server.

The uploading of the high-dimensional local gradients incurs excessive communication loads, resulting in high communication latency. Since the model is usually trained for subsequent edge inference tasks \cite{GuiGuan2}, and outdated model will lead to low inference accuracy, communication latency becomes a crucial issue in FEEL. To reduce the communication latency and accelerate FEEL, a vein of active research is devoted to designing communication-efficient FEEL exploiting the sparsity of gradient updates \cite{yujun2018deepgrad, scattler2019sparsity} and low-resolution gradient/model  quantization \cite{alistarh2017QSGD,jeremy2018signSGD}. Moreover, one-bit gradient quantization for FEEL incorporating wireless channel hostilities is investigated in \cite{zhu2021onebit}, where model convergence is demonstrated in the presence of channel noise. In \cite{Du2020SGquant}, the authors further improve the compression ratio by a novel hierarchical vector quantization scheme using low-dimensional Grassmannian codebooks.

Another challenge associated with FEEL is the heterogeneity of wireless channels and the resources of edge devices. Due to channel fading, the connections between certain edge devices and the edge server may suffer from deep fading in some iteration rounds. In this case, the communication latency for those devices becomes overwhelming even with highly compressed gradient/model. Moreover, the different computation capabilities and different number of local training samples will result in different computation time. Such heterogeneity will cause significant slow-down in the runtime due to the existence of stragglers and in turn exacerbate the convergence speed of the FEEL system. Several strategies have been proposed to deal with heterogeneity in federated learning. The first is to employ asynchronous update that allows model aggregation without waiting for slow-responding devices \cite{xie2019asynchronous,sprague2019async,chen2018lag,chen2019asynchronous}. While asynchronous training have proven to be faster than their synchronous counterparts due to their straggler resistance, they often result in convergence to poorer results \cite{jianmin2016revisit}, and the dynamics in asynchronous FEEL bring more challenges in parameter tuning \cite{Chen2019EfficientAR}. Another more radical strategy is to directly discard the slow devices by designing various client selection schemes \cite{xu2021client,xia2020client,Mohammed2021client}. Although discarding the slow devices can reduce the latency for one-round iteration, the number of iterations required generally becomes larger since such schemes cannot make full use of the valuable data resources at the slower devices.

Recently, a line of research works have made some attempts to deal with the communication efficiency and heterogeneity issue from the perspective of aggregation topology design. The authors in \cite{abad2020hierarchical} proposed a hierarchical FEEL framework, where small base stations were introduced to orchestrate FEEL among the devices
within their cells, and periodically exchange model updates
with the macro base station for global consensus. It was shown that the hierarchical FEEL scheme significantly reduces the communication latency without sacrificing the accuracy. Similarly, the authors in \cite{Luo2020HFL} investigated joint computation and communication resource allocation in a device-edge-cloud hierarchical FEEL, and designed a device-edge association scheme to address the heterogeneity of wireless channels and device resources. Moreover, to involve more participates in the model training process, the authors in \cite{Metropolis,CMZ} propose to let devices with deep channel fading send local model updates to neighboring devices with higher channel gains, which is called the lazy Metropolis update approach. Nevertheless, most of the existing hierarchical FEEL frameworks are developed with fixed topology. The design and optimization of more flexible aggregation topology for FEEL systems remains largely uncharted.

\subsection{Challenges and Contributions}
In this paper, we investigate hierarchical FEEL with an adjustable gradient uploading and aggregation topology. By exploiting device-to-device (D2D) communications, any device in the FEEL system can act as an aggregation-and-forward (AF) device that can receive the gradients from other devices, and aggregates its own gradient with the received ones. The aggregated gradient is then sent to the edge server or another AF device for further aggregation. This scheme is referred to as topology-optimized federated edge learning (TOFEL) in this paper. The
advantages of TOFEL can be intuitively explained as follows. The device with harsh channel condition to the edge server can flexibly choose a nearby device with favorable channel to aggregate and forward its gradient, which avoids the high transmission latency to the edge server. In addition to the link power gain, the gradient aggregation at the AF devices will not increase their uplink traffic loads, compared with conventional relaying communications. Moreover, the computing speed adaptation via dynamic voltage and frequency scaling (DVFS) \cite{Singh2019DVFS} can be exploited jointly with aggregation topology optimization to boost the computation efficiency of the devices.

However, the design of a TOFEL system is nontrivial. In this paper, we make some attempts to answer to following two relevant questions:
\begin{enumerate}
\item \textit{How to design the optimal topology for gradient aggregation according to dynamic wireless channels and heterogeneous resources at the edge devices?}
\item \textit{How to adjust the computing speed of the devices to adapt to the TOFEL framework to improve computation efficiency?}
\end{enumerate}

To answer these two questions, we formulate the problem of joint topology and computing speed optimization as a mixed-integer nonlinear program (MINLP), which aims at minimizing the weighted summation of energy consumption and latency. Conventional combinatorial optimization methods to solve MINLP, e.g., branch-and-bound searching, suffer from high complexity when the searching space is large. Thus, they can hardly meet the real-time constraints of wireless scheduling. Our contributions in solving the above MINLP are summarized below:
\begin{itemize}
    \item \textbf{Efficient penalty-based method to solve MINLP.} A penalty-based successive convex approximation (SCA) method is developed to transform the MINLP into an equivalent continuous optimization problem that can be solved efficiently. Also, the optimization results demonstrate that our proposed TOFEL scheme can remarkably accelerate the federated learning process, and achieve higher energy efficiency in the meanwhile.
    \item \textbf{Imitation-learning based implementation for real-time decision making.} A deep neural network (DNN) is designed to imitate the optimization algorithm. Specifically, we use the penalty-based method to generate massive demonstrations to train the DNN in an offline manner. Once the imitation learning DNN is trained, it can be efficiently deployed at the edge devices to make distributed inference in an online manner, and the inference process is seamlessly integrated into the TOFEL framework.
\end{itemize}

Finally, we implement our proposed TOFEL scheme for 3D object detection task with multi-vehicle point cloud datasets in CARLA simulator. It is shown that with the same resource and deadline constraints, the proposed TOFEL scheme can achieve much higher detection accuracy than that of existing FEEL schemes.

\section{System Model}
We consider a federated edge learning (FEEL) system as shown in Fig. \ref{fig:TopoFL}, where a set of $K$ edge devices  denoted as $\mathcal K=\{1,\cdots, K\}$ are performing a federated learning task with the help of an edge server (indexed by $K+1$). The federated learning process consists of iterative local gradient computing, uploading and global model updating. Each iteration of FEEL is called a communication round, or round for short. Data communications among the edge devices and edge server are via wireless links. It is assumed that the server has perfect knowledge of wireless channel gains and edge devices' computation characteristics, which can be obtained by feedback. Using this information, the edge server determines the communication and computation parameters at the beginning of each round, i.e., the gradient aggregation topology of the FEEL system, and the computing speed of the edge devices.

\subsection{Federated Learning Model}

\begin{figure}[t]
    \centering
    \includegraphics[width=\linewidth]{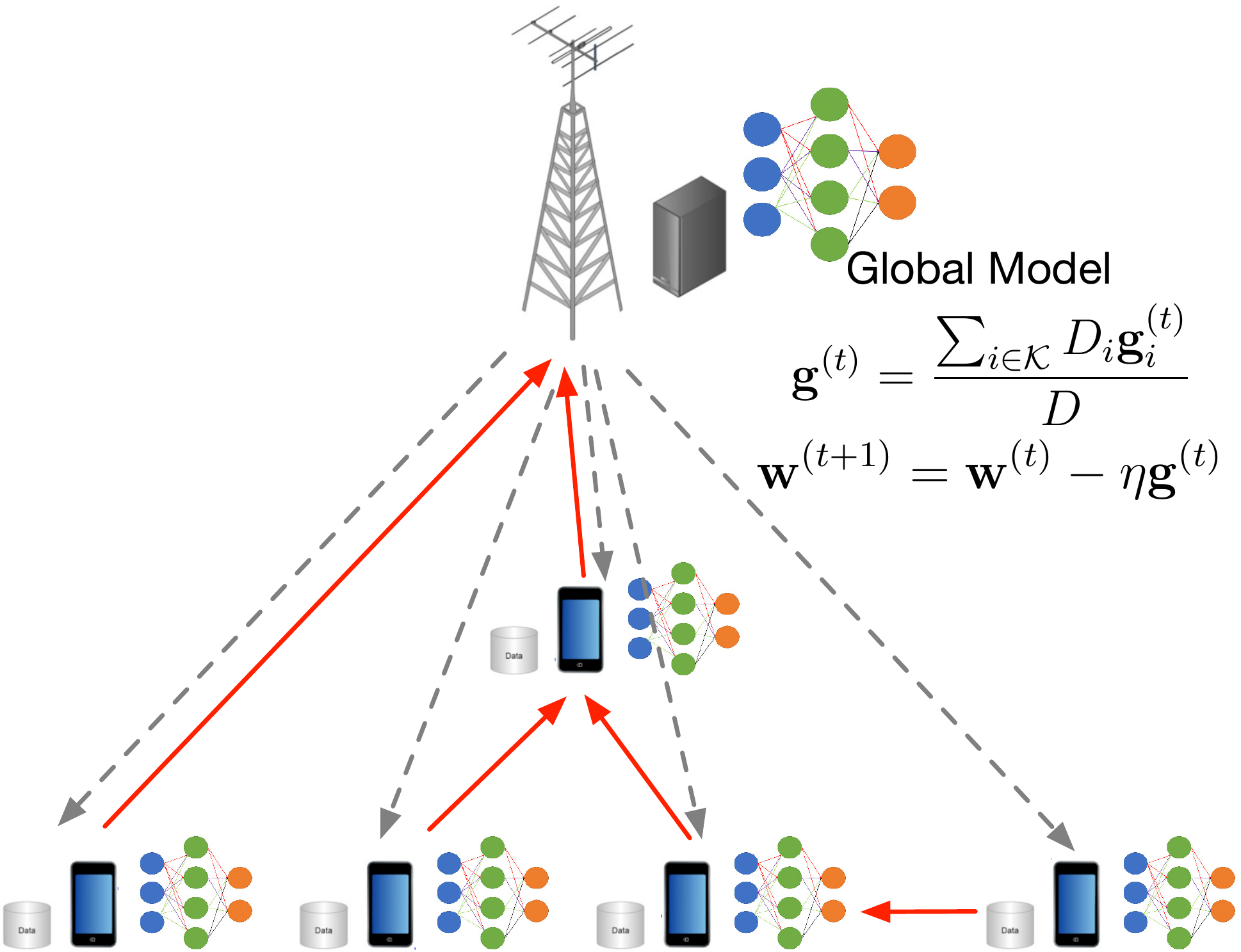}
    \caption{TOFEL: Topology-optimized federated edge learning framework.}
    \label{fig:TopoFL}
\end{figure}
As shown in Fig. \ref{fig:TopoFL}, a global model, represented by the parameter set $\mathbf w$, is trained collaboratively by the edge devices and the edge server. Each device (say device $i$) has a local dataset $\mathcal D_i$ with $D_i=|\mathcal D_i|$ sample points. The local loss function of the model parameter $\mathbf w$ on the dataset $\mathcal D_i$ is given by
\begin{align}
    \text{(Local loss function)}\quad
    F_i(\mathbf w)=\frac{1}{D_i} \sum_{(\mathbf x_n, y_n)\in\mathcal D_i}\ell(\mathbf w; \mathbf x_n, y_n),
\end{align}
where $\ell(\mathbf w; \mathbf x_n, y_n)$ is the sample-wise loss function measuring the prediction error of the model $\mathbf w$ on the training sample $\mathbf x_n$ with respect to its ground-true label $y_n$. Then the global loss function on all the distributed datasets can be expressed as
\begin{align}
    \text{(Global loss function)}\,
    F(\mathbf w)&=\frac{\sum_{(\mathbf x_n, y_n)\in\cup_i\mathcal D_i}\ell(\mathbf w; \mathbf x_n, y_n)}{|\cup_i\mathcal D_i|}\nonumber\\
    &=\frac{\sum_{i\in\mathcal K}D_iF_i(\mathbf w)}{D},
\end{align}
where $D=\sum_{k=1}^{K}D_i$ is the total number of data samples from all the edge devices.

The learning process is to minimize the global loss function $F(\mathbf w)$, namely,
\begin{align}\label{prob:fl}
    \mathbf w^*=\argmin F(\mathbf w).
\end{align}
$F(\mathbf w)$ can be computed directly by uploading all the local data to the edge server \cite{huang2021RISEL}, which unfortunately will cause privacy concern. To this end, FEEL solves problem (\ref{prob:fl}) in a distributed manner without transmitting the datasets to the edge server. For ease of elaboration, we focus on gradient-averaging implementation in the subsequent exposition, while similar design also applies to model-averaging implementation.

In each round, say $t$-th round, the edge server broadcasts the current global model $\mathbf w^{(t)}$ to all the edge devices. After receiving the model $\mathbf w^{(t)}$, each edge device calculates the gradient with its local dataset as:
\begin{align}
    \text{(Local gradient)}\quad
    \mathbf g_i^{(t)}=\nabla F_i(\mathbf w^{(t)}).
\end{align}
Then, the local gradients are uploaded to the edge server via wireless communication for gradient aggregation:
\begin{align} \label{eq:grad}
    \text{(Gradient aggregation)}\quad
    \mathbf g^{(t)}=\frac{\sum_{i\in\mathcal K}D_i\mathbf g_i^{(t)}}{D}.
\end{align}
After that, the global model is updated by stochastic gradient descent (SGD) as
\begin{align}
    \text{(Global model update)}\quad
    \mathbf w^{(t+1)}=\mathbf w^{(t)}-\eta\mathbf g^{(t)},
\end{align}
where $\eta$ is the learning rate. The above process iterates until the model converges.

Note that the system also supports model parameter aggregation. 
In such a case, the local gradient calculation step in equation (4) becomes the local parameter update step, which is given by
\begin{align}
    \text{(Local parameter update)}\quad
    \mathbf u^{(t)}_i=\mathbf w^{(t)}-\eta\nabla F_i(\mathbf w^{(t)}).
\end{align}
Then, the local parameters are uploaded to the edge server via wireless communication for parameter aggregation:
\begin{align}
    \text{(Global model update)}\quad
    \mathbf w^{(t+1)}=\frac{\sum_{i\in\mathcal K}D_i\mathbf u^{(t)}_i}{D}.
\end{align}
It can be seen that the gradient aggregation and model parameter aggregation are mathematically equivalent.

\subsection{Gradient Aggregation Topology}
In most of the existing literature on FEEL, the edge devices upload the local gradients directly to the edge server \cite{Zhu2019BroadbandAggr,chen2021joint,amiri2020edgelearn,Du2020SGquant,guo2021analoggrad}, or via fixed relays \cite{abad2020hierarchical,Luo2020HFL}.
Such FEEL systems suffers from the ``barrel effect": the latency for one-round iteration is determined by the device with longest computation and transmission time. For instance, if the wireless channels between certain devices and the edge server are very poor, the long uploading time of the local gradients of these devices will drag the latency of the one-round iteration significantly. Meanwhile, to meet the deadline requirement, more power may be consumed by them to shorten the computation time of their local updates. To address this issue, we introduce a novel TOFEL framework as shown in Fig. \ref{fig:TopoFL}, where the devices with poor wireless channels to the edge server may send their local gradients to their nearby devices. The receiving devices aggregate their local gradients with the received gradients, and forward the results to one another device or the edge server for further aggregation.

For ease of notation, the edge devices and the edge server are collectively referred to as ``nodes'' in the subsequent exposition. The nodes that aggregate and forward gradients are called AF nodes. The set of all nodes is denoted as $\mathcal K^0=\mathcal K\cup\{K+1\}$ with $K+1$ representing the index of the edge server. Denote $I_{i,j}$ as an indicator which equals 1 if node $i$ transmits its gradient to node $j$, and 0 otherwise. Thus, $\mathbf I=\{I_{i,j}|i\in \mathcal K, j\in\mathcal K^0\}$ uniquely specifies the gradient transmission and aggregation topology in the federated learning system. If node $j$ works as an AF node, it will aggregate its own gradient vector with all its received gradient vectors. Thus, the aggregated gradient vector at node $j$ is given by
\begin{align}\label{eq:gradaggr}
\tilde{\mathbf g}_j=\frac{\sum_{i\in\mathcal K}I_{i,j}\tilde D_i\tilde{\mathbf g}_i+D_j\mathbf g_j}{\sum_{i\in\mathcal K}I_{i,j}\tilde D_i+D_j},
\end{align}
where $\tilde D_i=\sum_{k\in\mathcal K}I_{k,i}\tilde D_k+D_i$ is the accumulated number of data samples at node $i$.
The aggregated gradient vector $\tilde{\mathbf g}_j$ is then transmitted to one another nodes for further aggregation.
\begin{Proposition}\label{prop:equiv}
    With the local gradient aggregation scheme in (\ref{eq:gradaggr}), if all the $K+1$ nodes in the FEEL system form a tree aggregation topology, the final aggregated gradient vector at the root node equals the aggregated gradient in (\ref{eq:grad}).
\end{Proposition}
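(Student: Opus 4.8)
The plan is to exploit the recursive structure of the tree and prove, by induction from the leaves up to the root, that every node's aggregated gradient is exactly the data-weighted average of the local gradients over its own subtree. First I would fix the tree rooted at the edge server (node $K+1$): since $\mathbf I$ specifies a tree, each device has exactly one outgoing edge (its parent, i.e.\ the node it forwards to), while the nodes $i$ with $I_{i,j}=1$ are precisely the children of $j$. For each node $j$, let $\mathcal T_j\subseteq\mathcal K^0$ denote the set containing $j$ together with all of its descendants. Because the topology is a tree, the children's subtrees are pairwise disjoint and partition $\mathcal T_j\setminus\{j\}$; this acyclic structure is also what makes the recursive definitions of $\tilde D_i$ and $\tilde{\mathbf g}_i$ well-posed, since the nodes can then be processed in a leaves-first (topological) order.

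The central claim I would establish by structural induction is the invariant
\begin{align}
\tilde D_j=\sum_{i\in\mathcal T_j}D_i,\qquad \tilde{\mathbf g}_j=\frac{\sum_{i\in\mathcal T_j}D_i\mathbf g_i}{\tilde D_j}.
\end{align}
For the base case, a leaf $j$ has $I_{i,j}=0$ for all $i$, so (\ref{eq:gradaggr}) collapses to $\tilde D_j=D_j$ and $\tilde{\mathbf g}_j=\mathbf g_j$, matching the claim with $\mathcal T_j=\{j\}$. For the inductive step, I would assume the invariant for every child $i$ of $j$ and substitute $\tilde D_i=\sum_{k\in\mathcal T_i}D_k$ and $\tilde D_i\tilde{\mathbf g}_i=\sum_{k\in\mathcal T_i}D_k\mathbf g_k$ into the numerator and denominator of (\ref{eq:gradaggr}). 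Using the subtree partition, the summations over the children telescope into a single summation over $\mathcal T_j\setminus\{j\}$; adding the node's own term $D_j\mathbf g_j$ (and $D_j$ in the denominator) then yields precisely $\tilde D_j=\sum_{i\in\mathcal T_j}D_i$ and $\tilde{\mathbf g}_j=\bigl(\sum_{i\in\mathcal T_j}D_i\mathbf g_i\bigr)/\tilde D_j$, closing the induction.

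Finally I would specialize the invariant to the root $K+1$, whose subtree is the entire node set, i.e.\ $\mathcal T_{K+1}=\mathcal K^0$. Since the edge server holds no local data (so its own term drops out and $\tilde D_{K+1}=\sum_{i\in\mathcal K}D_i=D$), the invariant gives $\tilde{\mathbf g}_{K+1}=\bigl(\sum_{i\in\mathcal K}D_i\mathbf g_i\bigr)/D$, which is exactly the centralized aggregate $\mathbf g$ in (\ref{eq:grad}). The computations themselves amount to routine weighted-average bookkeeping; the step I expect to require the most care is the inductive step, specifically justifying the subtree partition and the well-posedness of the recursion. This is the only place where the tree (acyclicity) hypothesis is genuinely used, and without it the telescoping of the child subtrees — and indeed the very definition of $\tilde D_i$ — would break down.
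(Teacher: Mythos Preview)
Your proposal is correct and follows essentially the same approach as the paper: both argue by induction on the tree structure that the hierarchical aggregation telescopes into the flat weighted average. Your per-node invariant (structural induction from leaves to root) is a cleaner formalization of the paper's tier-by-tier induction, but the underlying argument is identical.
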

\begin{proof}
    Please refer to Appendix A.
\end{proof}
From Proposition \ref{prop:equiv}, we have the following constraints on the gradient aggregation topology: 1) there should be no ring in the topology, and hence an arbitrary node can only transmit its gradient to one AF node; 2) the edge server should be the root node of the aggregation tree.

\begin{Remark}
    Note that in the TOFEL framework, each edge device (say the node $j$) acts as a dual functional relay and source node, which forwards the updated gradient $\tilde{\mathbf g}_j$ (rather than the received gradients $\{\mathbf g_{i}|I_{i,j}=1\}$) to its father node. Hence, the TOFEL framework will not increase the uploading load of each edge device. This is fundamentally different  from conventional multi-hop relay systems where the relay only forwards information from other nodes.
\end{Remark}

Finally, the convergence of TOFEL is proved in the following Proposition. 

\begin{Proposition}\label{prop:convergence}
Assume that the function $F_i(\mathbf{w})$ is $\mu$-strongly convex and is twice differentiable with $\nabla^2_{\mathbf{w}}F_i(\mathbf{w})\preceq L\mathbf{I}$.
Then with $\eta=\frac{1}{L}$, the TOEFL framework satisfies
\begin{align}
F(\mathbf{w}^{(t+1)})-F(\mathbf{w}^{*})
&\leq
\left(1-\frac{\mu}{L}\right)^{i+1}
\left[F(\mathbf{w}^{(o)})-F(\mathbf{w}^{*})\right]. \nonumber
\end{align}
\end{Proposition}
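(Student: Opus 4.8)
The plan is to leverage Proposition~\ref{prop:equiv} to collapse the entire TOFEL aggregation process into a single full-batch gradient step, after which the claim reduces to the classical linear-convergence bound for gradient descent on a strongly convex, smooth objective. First I would invoke Proposition~\ref{prop:equiv}: since the nodes form a tree rooted at the edge server, the gradient delivered to the root after all aggregations equals the centralized quantity in \eqref{eq:grad}, i.e. $\mathbf g^{(t)}=\frac{1}{D}\sum_{i\in\mathcal K}D_i\mathbf g_i^{(t)}$. Because $\mathbf g_i^{(t)}=\nabla F_i(\mathbf w^{(t)})$ and $F=\frac{1}{D}\sum_{i\in\mathcal K} D_iF_i$ by definition, the aggregated gradient is exactly $\nabla F(\mathbf w^{(t)})$. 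Hence the global update $\mathbf w^{(t+1)}=\mathbf w^{(t)}-\eta\mathbf g^{(t)}$ is precisely deterministic gradient descent on $F$ with step size $\eta=1/L$, and the topology plays no further role in the convergence analysis.

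Next I would transfer the curvature hypotheses from the per-device losses to the global loss. Since $\mu$-strong convexity together with the Hessian upper bound means $\mu\mathbf I\preceq\nabla^2_{\mathbf w} F_i(\mathbf w)\preceq L\mathbf I$ for every $i$, and $\nabla^2_{\mathbf w} F=\frac{1}{D}\sum_{i\in\mathcal K} D_i\nabla^2_{\mathbf w} F_i$ is a convex combination of these Hessians, the same sandwich $\mu\mathbf I\preceq\nabla^2_{\mathbf w} F(\mathbf w)\preceq L\mathbf I$ holds for $F$. Thus $F$ is itself $\mu$-strongly convex and $L$-smooth, which is all the remaining argument requires.

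Finally I would run the standard two-inequality argument. The descent lemma from $L$-smoothness gives $F(\mathbf w^{(t+1)})\le F(\mathbf w^{(t)})-\frac{1}{2L}\lVert\nabla F(\mathbf w^{(t)})\rVert^2$ once $\eta=1/L$ is substituted, while $\mu$-strong convexity yields the gradient-domination (Polyak--{\L}ojasiewicz) inequality $\lVert\nabla F(\mathbf w^{(t)})\rVert^2\ge 2\mu\big(F(\mathbf w^{(t)})-F(\mathbf w^{*})\big)$. Combining the two and subtracting $F(\mathbf w^{*})$ produces the one-step contraction $F(\mathbf w^{(t+1)})-F(\mathbf w^{*})\le\big(1-\tfrac{\mu}{L}\big)\big(F(\mathbf w^{(t)})-F(\mathbf w^{*})\big)$, and unrolling this recursion over $t+1$ rounds gives the stated bound (here I read the exponent $i+1$ and base point $\mathbf w^{(o)}$ in the statement as $t+1$ and $\mathbf w^{(0)}$). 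The argument is essentially routine once the reduction is in place; the substantive step is the first paragraph, since without Proposition~\ref{prop:equiv} the root aggregate need not equal $\nabla F$ and the contraction would fail, and the only point needing mild technical care is deriving the Polyak--{\L}ojasiewicz inequality from $\mu$-strong convexity.
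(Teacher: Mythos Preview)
Your proposal is correct; the reduction via Proposition~\ref{prop:equiv} to centralized full-batch gradient descent on $F$, followed by the standard descent-lemma/PL contraction, is exactly the right argument, and your reading of the typos $i+1\to t+1$ and $\mathbf w^{(o)}\to\mathbf w^{(0)}$ is the intended one. The paper itself does not supply a proof but simply cites an external reference (\cite{wang2021edge}, Appendix~A), so your write-up fills in the details the paper omits and is in substance the same classical argument that reference contains.
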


\begin{proof}
See \cite{wang2021edge} Appendix A.
\end{proof}


\subsection{Local Computation Model}
Following the computation model in \cite{zhang2018shufflenet, Huang2020MEC}, we denote $N_{\mathsf{FLOP}}$ as the number of floating point operations (FLOPs) needed for processing each training data sample. Furthermore, we define $f_i^{c}$ (in cycles/s) as the clock frequency of the CPU at the $i$-th edge device. It follows that the computing speed of the $i$-th edge device is given by $f_i=f_i^{c}\times z_i$, with $z_i$ denoting the number of FLOPs per CPU cycle.  Thus, the computation time for local gradient calculation at the $i$-th edge device is given by
\begin{align}
    t_i^{\text{cmp}}(f_i)=\frac{D_iN_{\mathsf{FLOP}}}{f_i}.
\end{align}
For a CMOS circuit, the power consumption of a processor can be modeled as a function of the clock frequency:
\begin{align}
    P_i^{\text{cmp}}(f_i)=\kappa_i^{c}(f_i^{c})^3=\kappa_if_i^3,
\end{align}
where $\kappa_i=\kappa_i^{c}/z_i^3$ and $\kappa_i^{c}$ is depending on the chip architecture \cite{Burd1996CPU}.
Moreover, the energy consumption for local model update at the $i$-th edge device can be obtained as
\begin{align}
    E_i^{\text{cmp}}(f_i)=\kappa_i f_i^3t_i^{\text{cmp}}=\kappa_i D_iN_{\mathsf{FLOP}}f_i^2.
\end{align}
The computing speed $f_i$ can be adjusted between $f_{\min}$ and $f_{\max}$ by DVFS, so does the computing energy consumption. Moreover, the computational complexity for aggregating the gradient vectors is negligible compared with that of local gradient computation. Hence, the computation time and energy consumption for gradient aggregation are neglected in this work.

\subsection{Wireless Communication Model}
In our considered system, the nodes can cooperate with each other to aggregate and forward their local gradients via wireless communication links. Specifically, the $i$-th node can either transmit its local gradient $\mathbf g_i$ to the edge server directly, or to the $j$-th node where  $j\in\mathcal K\backslash\{i\}$ via device-to-device (D2D) communication. Denote the distance between node $i$ and $j$ as $d_{i,j}$, the pathloss coefficient as $\alpha$, and the small-scale fading as $h_{i,j}\sim\mathcal{CN}(0,1)$. The channel gain from node $i$ to node $j$ is given by $H_{i,j}=g_0(d_0/d_{i,j})^{\alpha}|h_{i,j}|$, where $g_0$ is the pathloss constant and $d_0$ is the reference distance.  Orthogonal frequency bands are allocated to all the nodes for gradient transmission. Let $w_i$ be the bandwidth allocated to node $i\in\mathcal K$. Hence, the achievable data transmission rate from node $i$ to node $j$ is given by
\begin{align}
    r_{i,j}=w_i\log_2\left(1+\frac{PH_{i,j}}{\sigma^2}\right),
\end{align}
where $P$ and $\sigma^2$ are the transmit power and channel noise at each node, respectively.
As a result, the transmission time and energy consumption for edge device $i$ is given by
\begin{align}
    t_{i}^{\text{cmm}}=\frac{B}{\sum_{j\in\mathcal K^0}I_{i,j}r_{i,j}},
\end{align}
\begin{align}
    E_{i}^{\text{cmm}}=\frac{BP}{\sum_{j\in\mathcal K^0}I_{i,j}r_{i,j}},
\end{align}
where $B$ is the number of bits for transmitting a gradient vector, and $\sum_{j\in\mathcal K^0}I_{i,j}=1$ since node $i$ can only transmit its gradient vector to one node $j$.

\section{Problem Formulation}
In this paper, we aim at accelerating the federated learning process, as well as reducing the energy consumption for training and transmission. Specifically, our objective is to minimize the weighted summation of energy consumption and latency for each round of the TOFEL framework. To achieve this goal, we jointly design the topology matrix $\mathbf I=\{I_{i,j}|i\in \mathcal K, j\in\mathcal K^0\}$ and optimize the local computing speed of each edge device $\mathbf f=\{f_i|i\in\mathcal K\}$ via DVFS. In the following, we first introduce the constraints on communication, computation and topology design, and then the overall scheduling problem.

\begin{figure}[t]
    \centering
    \includegraphics[width=\linewidth]{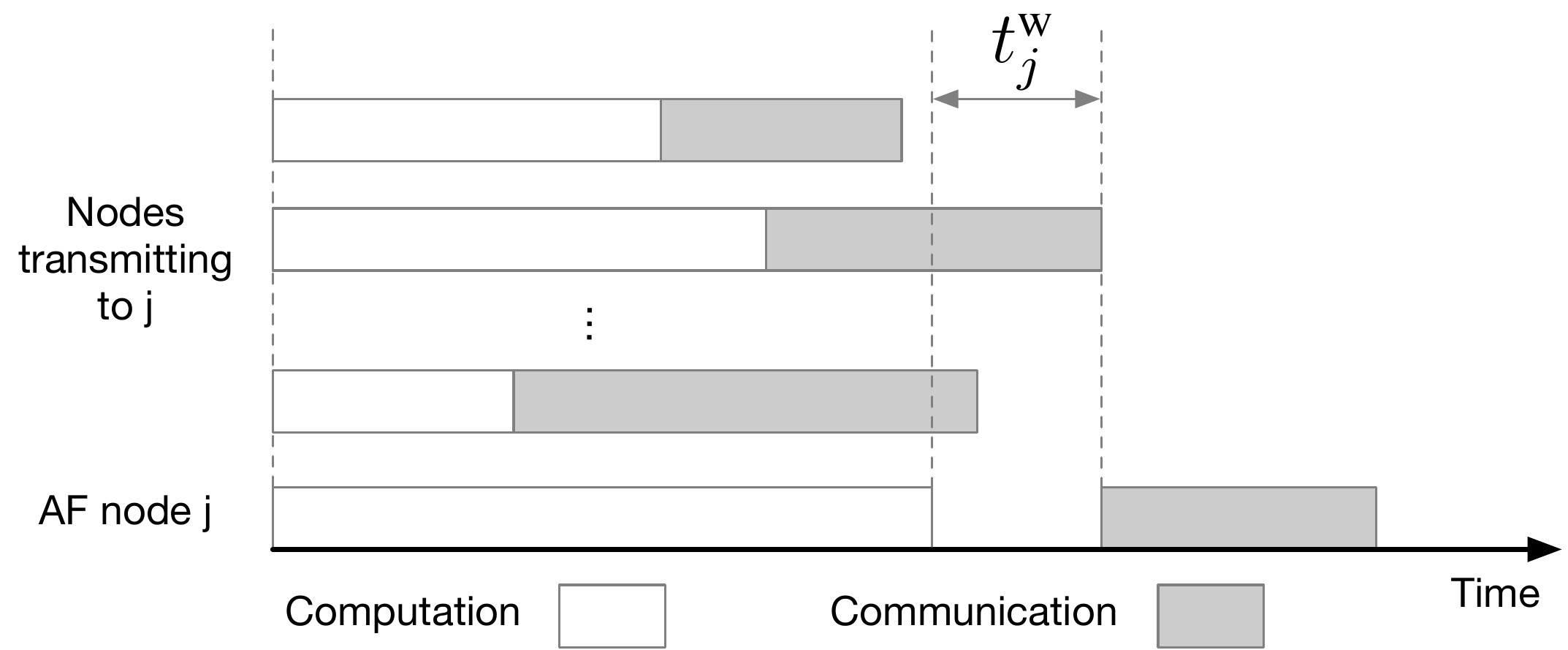}
    \caption{An illustration of the computation and communication time constraint.}
    \label{fig:waittime}
\end{figure}
Since an AF node (say the node $j$) should receive and aggregate gradient vectors from its neighboring nodes,  it can only start transmitting after it has received all the gradient vectors from nodes $i$ with $I_{i,j}=1$. Hence, if the local computation time of node $j$ is shorter than the local computation time plus transmission time of nodes $i$ with $I_{i,j}=1$, it has to wait for a period of time $t_j^{\text{w}}\geq 0$, as shown in Fig. \ref{fig:waittime}. Therefore, we have the following constraints on the computation and communication time of all the connected node pairs:
\begin{align}
    \left(\frac{D_iN_{\mathsf{FLOP}}}{f_i}+\frac{B}{\sum_{j\in\mathcal K^0}I_{i,j}r_{i,j}}\right)I_{i,j}\leq \frac{D_{j}N_{\mathsf{FLOP}}}{f_{j}}+t_j^{\text{w}},\nonumber\\i\in\mathcal K,\,j\in\mathcal K.
\end{align}
It is easy to observe that the larger $t_j^{\text{w}}$ is, the faster computing speed $f_j$ is required for the AF node $j$, leading to larger computation energy consumption of node $j$. Hence, the optimal $t_j^{\text{w}}$ must be 0. In this case, if the local computing time of node $j$ is shorter than the local computing time plus transmission time of nodes $i$ with $I_{i,j}=1$, we can lower its computing speed to reserve energy consumption. Thus, the refined computation and communication time constraints are given below:
\begin{align} \label{eq:CCtimeconst}
    &\text{(Comp. \& comm. time const.)}\nonumber\\
    &\left(\!\frac{D_iN_{\mathsf{FLOP}}}{f_i}\!+\!\frac{B}{\sum_{j\in\mathcal K^0}I_{i,j}r_{i,j}}\!\right)\!I_{i,j}\!\leq\! \frac{D_{j}N_{\mathsf{FLOP}}}{f_{j}},\,i\!\in\!\mathcal K,j\!\in\!\mathcal K.
\end{align}

Moreover, we assume that each node $i$ can only transmit its local model to one node, and at least one node is directly connected to the edge server. Hence, we have the following constraint on the node topology:
\begin{align}
    \text{(Node topology constraints)}\,
    &\sum_{j\in\mathcal K^0}I_{i,j}=1, I_{i,i}=0,\,\forall i\in\mathcal K. \label{eq:node_topo_1}\\
    &\sum_{i\in\mathcal K}I_{i,K+1}\geq 1.\label{eq:node_topo_2}
\end{align}

\begin{Proposition}\label{prop:noring}
    Equation (\ref{eq:CCtimeconst}) guarantees that there is no ring in the topology of TOFEL framework. Together with equations (\ref{eq:node_topo_1}) and (\ref{eq:node_topo_2}), the topology of the federated learning system is defined as a tree topology, which guarantees that all the local gradient vectors are aggregated once and only once.
\end{Proposition}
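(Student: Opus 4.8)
The plan is to split the statement into two logically distinct claims and prove them in sequence: (i) constraint (\ref{eq:CCtimeconst}) rules out directed cycles (``rings''), and (ii) acyclicity together with the degree constraints (\ref{eq:node_topo_1})--(\ref{eq:node_topo_2}) forces the topology to be a spanning tree rooted at the server, from which the ``once and only once'' aggregation property follows via Proposition \ref{prop:equiv}. The engine for the whole argument is a single observation: (\ref{eq:CCtimeconst}) induces a \emph{strict} monotone ordering of the computation-time values $t_i^{\text{cmp}}=D_iN_{\mathsf{FLOP}}/f_i$ along every transmission edge connecting two edge devices. Concretely, whenever $I_{i,j}=1$ with $i,j\in\mathcal K$, the constraint reads $t_i^{\text{cmp}}+t_i^{\text{cmm}}\le t_j^{\text{cmp}}$ with $t_i^{\text{cmm}}=B/\sum_{j'\in\mathcal K^0}I_{i,j'}r_{i,j'}$; since $B>0$ and the rates are finite and positive, $t_i^{\text{cmm}}>0$, so in fact $t_i^{\text{cmp}}<t_j^{\text{cmp}}$ strictly.

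With strict monotonicity in hand, I would rule out rings by contradiction. First note that the server $K+1$ carries no outgoing edge, because (\ref{eq:node_topo_1}) only quantifies the out-degree constraint over $i\in\mathcal K$; hence any directed cycle must lie entirely within $\mathcal K$, and every one of its edges is exactly of the type to which (\ref{eq:CCtimeconst}) applies. Chaining the strict inequality around a hypothetical ring $i_1\!\to\! i_2\!\to\!\cdots\!\to\! i_m\!\to\! i_1$ yields $t_{i_1}^{\text{cmp}}<t_{i_2}^{\text{cmp}}<\cdots<t_{i_m}^{\text{cmp}}<t_{i_1}^{\text{cmp}}$, a contradiction. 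This establishes that the directed graph defined by $\mathbf I$ is acyclic.

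For the tree conclusion I would use an edge-count-plus-reachability argument. By (\ref{eq:node_topo_1}) each of the $K$ edge devices has out-degree exactly one while the server has out-degree zero, giving precisely $K$ directed edges on $K+1$ nodes. Starting at any device and following the unique outgoing edge produces a directed walk that, by acyclicity, cannot repeat a vertex and therefore must terminate at the only out-degree-zero node, namely the server (whose role as a reachable sink is consistent with (\ref{eq:node_topo_2})). Thus every node reaches the server, so the underlying undirected graph is connected; a connected graph with $K+1$ vertices and $K$ edges is a tree, i.e. an in-arborescence directed toward the server. The ``once and only once'' property is then immediate: in a tree each device possesses a unique path to the root, so its local gradient is incorporated at each AF node along that single path and merged into the root aggregate exactly one time, which by Proposition \ref{prop:equiv} reproduces the target aggregate (\ref{eq:grad}).

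The main obstacle I anticipate is entirely in the first half: making airtight that the ring must be confined to $\mathcal K$ (so that (\ref{eq:CCtimeconst}), which is only stated for $i,j\in\mathcal K$, governs \emph{every} edge of the cycle) and that the transmission time is \emph{strictly} positive, since without strictness the chained inequality would only give $t_{i_1}^{\text{cmp}}\le t_{i_1}^{\text{cmp}}$ and fail to produce a contradiction. Once these two points are secured, the cycle contradiction is instant and the remaining tree and uniqueness claims reduce to the standard degree/edge-counting facts sketched above.
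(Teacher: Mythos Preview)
Your proposal is correct and follows essentially the same route as the paper: both establish acyclicity by chaining the strict inequality $t_i^{\text{cmp}}<t_j^{\text{cmp}}$ (obtained from (\ref{eq:CCtimeconst}) using $B>0$) around a hypothetical ring to reach a contradiction, and then invoke (\ref{eq:node_topo_1})--(\ref{eq:node_topo_2}) for the tree structure. Your edge-count-plus-reachability argument for the tree part is more carefully spelled out than the paper's one-sentence conclusion, but the substance is identical.
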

\begin{proof}
    Please refer to Appendix B.
\end{proof}

With a tree topology, the total latency of one-round TOFEL is determined by the largest computation and communication time of the nodes that are connected to the edge server, and therefore is given by
\begin{align} \label{eq:latency}
    t^{\text{tot}}&=\max_{i\in\mathcal I_{\text{edge}}}\left\{\frac{D_iN_{\mathsf{FLOP}}}{f_i}+\frac{B}{\sum_{j\in\mathcal K^0}I_{i,j}r_{i,j}}\right\}\nonumber\\
    &=\max_{i\in\mathcal K}\left\{\frac{D_iN_{\mathsf{FLOP}}}{f_i}+\frac{B}{\sum_{j\in\mathcal K^0}I_{i,j}r_{i,j}}\right\},
\end{align}
where $\mathcal I_{\text{edge}}=\{i|I_{i,K+1}=1,i\in\mathcal K\}$ denotes the set of nodes that are connected to the edge server. Moreover, the total computation and communication energy consumption for one-round iteration is given by
\begin{align} \label{eq:energy}
    E^{\text{tot}}=\sum_{i\in \mathcal K}\left(\kappa_i D_iN_{\mathsf{FLOP}}f_i^2+\frac{BP}{\sum_{j\in\mathcal K^0}I_{i,j}r_{i,j}}\right).
\end{align}

Therefore, the optimization problem which minimizes the weighted summation of energy consumption and latency with computation, communication, and topology constraints can be formulated as
\begin{align}
    \textbf{(P1)}\,
    \min_{\mathbf I,\mathbf f} \, &\sum_{i\in \mathcal K}\left(\kappa_i D_iN_{\mathsf{FLOP}}f_i^2+\frac{BP}{\sum_{j\in\mathcal K^0}I_{i,j}r_{i,j}}\right)\nonumber\\
    &+\mu\max_{i\in\mathcal K}\left\{\frac{D_iN_{\mathsf{FLOP}}}{f_i}+\frac{B}{\sum_{j\in\mathcal K^0}I_{i,j}r_{i,j}}\right\}\\
    \mathrm{s.t.} \,
    &\left(\frac{D_iN_{\mathsf{FLOP}}}{f_i}+\frac{B}{\sum_{j\in\mathcal K^0}I_{i,j}r_{i,j}}\right)I_{i,j}\leq \frac{D_jN_{\mathsf{FLOP}}}{f_j},\nonumber\\
    &\qquad\qquad\qquad\qquad\qquad\qquad i\in\mathcal K,j\in\mathcal K,\nonumber\\
    &\sum_{j\in\mathcal K^0}I_{i,j}=1, I_{i,i}=0,\,i\in\mathcal K,\nonumber\\
    &\sum_{i\in\mathcal K}I_{i,K+1}\geq 1,\nonumber\\
    &I_{i,j}\in\{0,1\},\,i\in\mathcal K, j\in\mathcal K^0,\label{eq:zerooneconst}\\
    &f_{\min}\leq f_i\leq f_{\max},\, i\in\mathcal K,\label{eq:freqconst}
\end{align}
where $\mu$ is the weighting factor to balance energy consumption and latency.

In the following, we first introduce a penalty-based solution to ($\mathbf{P1}$) where an SCA-based algorithm is used. An imitation learning based method is then introduced to train DNNs based on datasets formed by the penalty-based solution.
Note that problem ($\mathbf{P1}$) can be generalized to joint optimization across multiple federated iterations by minimizing the weighted summation of energy consumption and latency across multiple federated iterations. For example, we can adjust the transmit powers at different rounds to accelerate the convergence speed. In such a case, the original power constant $P$ would become a vector variable $\mathbf{p}=[p^{(1)},\cdots,p^{(T)}]^T$ with constraint $\frac{1}{T}\sum_tp^{(t)}\leq P$. The variables to optimize become the topology $\mathbf I^{(t)}$, frequency $\mathbf f^{(t)}$ and transmit power $\mathbf p^{(t)}$ in each round $t$.


\section{Penalty-Based Solution}
Problem ($\mathbf{P1}$) is an MINLP problem. Direct branch-and-bound approach could solve such combinatorial problems. However, prohibitive  complexity may incur in the case of a large number of devices due to the high-dimensional search space. Therefore, we resort to a penalty-based method to transform the MINLP problem into a continuous problem.

\subsection{Penalty-Based Continuous Reformulation}
To tackle the discontinuity , we first relax the binary constraint $I_{i,j}\in\{0,1\}$ into a linear constraint $I_{i,j}\in [0,1]$, $i\in\mathcal K, j\in \mathcal K^0$. However, in general, the relaxation is not tight, i.e., the solution to the relaxed problem could be $0<I_{i,j}<1$, $\forall i,j$. Therefore, to promote a binary solution for the relaxed variable $I_{i,j}$, we regularize the objective function with a penalty term as in \cite{Rinaldi2009zero-one,Bach2012Sparsity}, which only depends on the relaxed variable $\mathbf I$. Accordingly, the approximate reformulation with regularized penalty term of ($\mathbf {P1}$) is given by
\begin{align}
    \textbf{(P2)}\,\,
    \min_{\mathbf I,\mathbf f} \, &\sum_{i\in \mathcal K}\left(\kappa_i D_iN_{\mathsf{FLOP}}f_i^2+\frac{BP}{\sum_{j\in\mathcal K^0}I_{i,j}r_{i,j}}\right)\nonumber\\
    &+\mu\max_{i\in\mathcal K}\left\{\frac{D_iN_{\mathsf{FLOP}}}{f_i}+\frac{B}{\sum_{j\in\mathcal K^0}I_{i,j}r_{i,j}}\right\}+\varphi(\mathbf I)\\
    \mathrm{s.t.} \quad
    & 0\leq I_{i,j}\leq 1, \, i\in\mathcal K, j\in\mathcal K^0,\label{eq:continuousI}\\
    &(\ref{eq:CCtimeconst}), (\ref{eq:node_topo_1}), (\ref{eq:node_topo_2}), \text{ and } (\ref{eq:freqconst}).\nonumber
\end{align}
where $\varphi(\mathbf I)$ is a penalty function to penalize the violation of the zero-one integer constraints. A celebrated penalty function was introduced in \cite{Giannessi1976}, where the penalty function is set as
\begin{align} \label{eq:penaltyterm}
    \varphi(\mathbf I)=\frac{1}{\beta}\sum_{i\in \mathcal K}\sum_{j\in\mathcal K^0}I_{i,j}(1-I_{i,j}),
\end{align}
where $\beta>0$ is the penalty parameter.
According to Proposition 1 in \cite{lucidi2010exact}, with penalty term (\ref{eq:penaltyterm}), there exist a value $\bar\beta>0$ such that, for any $\beta\in[0,\bar\beta]$, problem ($\mathbf {P1}$) and ($\mathbf{P2}$) have the same minimum points. That is, ($\mathbf {P1}$) and ($\mathbf{P2}$) are equivalent with a proper choice of $\beta$.

\subsection{ SCA-Based Algorithm}

To begin with, we need to derive an equivalent difference-of-convex (DC) formulation for problem ($\mathbf{P2}$). Among the objective function and all the constraints, the only non-DC part is constraint (\ref{eq:CCtimeconst}). To derive its DC reformulation, we introduce a  slack variable $\mathbf T=\{t_{i,j}|i\in\mathcal K, j\in\mathcal K\}$ which satisfies $\frac{I_{i,j}}{f_i}\leq t_{i,j}^2$ and a slack variable $\mathbf S=\{s_{i,j}|i\in\mathcal K,j\in\mathcal K\}$ which satisfies $\frac{I_{i,j}}{\sum_{j\in\mathcal K^0}I_{i,j}r_{i,j}}\leq s_{i,j}^2$ to transform the bilinear terms in (\ref{eq:CCtimeconst}) into quadratic terms. Consequently, problem ($\mathbf {P2}$) can be equivalently written as the following form:
\begin{align}
    \textbf{(P3)}\nonumber\\
    \min_{\mathbf I,\mathbf f}  &\sum_{i\in \mathcal K}\!\!\left(\!\!\kappa_i D_iN_{\mathsf{FLOP}}f_i^2\!+\!\frac{BP}{\sum_{j\in\mathcal K^0}\!r_{i,j}I_{i,j}}\!+\!\frac{1}{\beta}\!\!\sum_{j\in\mathcal K^0}\!\!I_{i,j}(1\!-\!I_{i,j})\!\!\right)\nonumber\\
    &+\mu\max_{i\in\mathcal K}\left\{\frac{D_iN_{\mathsf{FLOP}}}{f_i}+\frac{B}{\sum_{j\in\mathcal K^0}r_{i,j}I_{i,j}}\right\}\label{eq:obj}\\
    \mathrm{s.t.} \,
    &D_iN_{\mathsf{FLOP}}t_{i,j}^2+Bs_{i,j}^2-\frac{D_jN_{\mathsf{FLOP}}}{f_j}\leq 0,\,i\in\mathcal K,j\in\mathcal K,\label{eq:dc1}\\
    &\frac{1}{f_i}-\frac{t_{i,j}^2}{I_{i,j}}\leq 0,\,i\in\mathcal K, j\in\mathcal K,\label{eq:dc2}\\
    &\frac{1}{\sum_{j\in\mathcal K^0}r_{i,j}I_{i,j}}-\frac{s_{i,j}^2}{I_{i,j}}\leq 0,\,i\in\mathcal K, j\in\mathcal K,\label{eq:dc3}\\
    &(\ref{eq:node_topo_1}), (\ref{eq:node_topo_2}), (\ref{eq:continuousI}), \text{ and } (\ref{eq:freqconst})\nonumber.
\end{align}

Now it can be seen that problem ($\mathbf {P3}$) is a DC problem. Thus we can construct a series of convex surrogate functions for all the concave terms using first-order Taylor expansion. Specifically, we have the following observations:
\begin{enumerate}
    \item The only concave term in the objective function is $-I^2_{i,j}$. Its convex surrogate function at $\mathbf I^{(t)}$ is given by $-2I_{i,j}^{(t)}I_{i,j}+I_{i,j}^{(t)^2}$;
    \item The concave terms in constraints (\ref{eq:dc1}), (\ref{eq:dc2}) and (\ref{eq:dc3}) are $-\frac{D_jN_{\mathsf{FLOP}}}{f_j}$, $-\frac{t_{i,j}^2}{I_{i,j}}$ and $-\frac{s_{i,j}^2}{I_{i,j}}$, respectively. Their convex surrogate functions at $(\mathbf I^{(t)}$, $\mathbf f^{(t)}, \mathbf T^{(t)},\mathbf S^{(t)})$ are $\frac{D_jN_{\mathsf{FLOP}}f_j}{f_j^{(t)^2}}-\frac{2D_jN_{\mathsf{FLOP}}}{f_j^{(t)}}$, $\left[-\frac{2t_{i,j}^{(t)}}{I_{i,j}^{(t)}},\frac{t_{i,j}^{(t)^2}}{I_{i,j}^{(t)^2}}\right][t_{i,j}-t_{i,j}^{(t)},I_{i,j}-I_{i,j}^{(t)}]\trans-\frac{t_{i,j}^{(t)^2}}{I_{i,j}^{(t)}}$, and $\left[-\frac{2s_{i,j}^{(t)}}{I_{i,j}^{(t)}},\frac{s_{i,j}^{(t)^2}}{I_{i,j}^{(t)^2}}\right][s_{i,j}-s_{i,j}^{(t)},I_{i,j}-I_{i,j}^{(t)}]\trans\!-\!\frac{s_{i,j}^{(t)^2}}{I_{i,j}^{(t)}}$, respectively;
    \item All other constraints of problem ($\mathbf {P3}$) are linear.
\end{enumerate}

Replacing the concave terms in the objective function and the DC constraints of problem ($\mathbf {P3}$) with the linearly expanded terms, and applying SCA algorithm, we have a sequence of convex optimization problems:
\begin{align}
    (\textbf{P}^{(t+1)})\nonumber\\
    \min_{\mathbf I,\mathbf f,\mathbf T,\mathbf S} \, &\sum_{i\in \mathcal K}\bigg(\kappa_i D_iN_{\mathsf{FLOP}}f_i^2+\frac{BP}{\sum_{j\in\mathcal K^0}r_{i,j}I_{i,j}}\nonumber\\
    &+\frac{1}{\beta}\sum_{j\in\mathcal K^0}(I_{i,j}-2I_{i,j}^{(t)}I_{i,j}+I_{i,j}^{(t)^2})\bigg)\nonumber\\&+\mu\max_{i\in\mathcal K}\left\{\frac{D_iN_{\mathsf{FLOP}}}{f_i}+\frac{B}{\sum_{j\in\mathcal K^0}r_{i,j}I_{i,j}}\right\}\\
    \mathrm{s.t.} \,
    &D_iN_{\mathsf{FLOP}}t_{i,j}^2\!+\!Bs_{i,j}^2\!+\!\frac{D_jN_{\mathsf{FLOP}}f_j}{f_j^{(t)^2}}\!-\!\frac{2D_jN_{\mathsf{FLOP}}}{f_j^{(t)}}\!\leq\! 0,\nonumber\\
    &\qquad\qquad\qquad\qquad\qquad\qquad i\in\mathcal K,j\in\mathcal K,\label{eq:expand1}\\
    &\frac{1}{f_i}+\left[-\frac{2t_{i,j}^{(t)}}{I_{i,j}^{(t)}},\frac{t_{i,j}^{(t)^2}}{I_{i,j}^{(t)^2}}\right][t_{i,j}-t_{i,j}^{(t)},I_{i,j}-I_{i,j}^{(t)}]\trans\nonumber\\
    &\qquad\qquad\qquad-\frac{t_{i,j}^{(t)^2}}{I_{i,j}^{(t)}}\leq 0,\,i\in\mathcal K, j\in\mathcal K,\label{eq:expand2}\\
    &\frac{1}{\sum_{j\in\mathcal K^0}r_{i,j}I_{i,j}}\!\!+\!\!\left[\!-\!\frac{2s_{i,j}^{(t)}}{I_{i,j}^{(t)}},\!\frac{s_{i,j}^{(t)^2}}{I_{i,j}^{(t)^2}}\!\right]\!\![s_{i,j}\!-\!s_{i,j}^{(t)},\!I_{i,j}\!-\!I_{i,j}^{(t)}]\trans\nonumber\\
    &\qquad\qquad\qquad-\frac{s_{i,j}^{(t)^2}}{I_{i,j}^{(t)}}\!\leq\! 0,\,i\in\mathcal K, j\in\mathcal K,\label{eq:expand3}\\
    &(\ref{eq:node_topo_1}), (\ref{eq:node_topo_2}), (\ref{eq:continuousI}), \text{ and } (\ref{eq:freqconst}),\nonumber
\end{align}
where $(\textbf{P}^{(t+1)})$ is the optimization problem in the $(t+1)$-th iteration of the SCA algorithm, and $\mathbf I^{(t)}$, $\mathbf f^{(t)}$, $\mathbf T^{(t)}$ and $\mathbf S^{(t)}$ are the optimal solution of problem $(\textbf{P}^{(t)})$. Note that each $(\textbf{P}^{(t)})$ is a convex optimization problem and can be solved via off-the-shelf toolbox (e.g. CVX Mosek). According to Theorem 1 of \cite{BMarks1978InnerApprox}, any limit point of the sequence $\{(\mathbf I^{(t)}, \mathbf f^{(t)}, \mathbf T^{(t)}, \mathbf S^{(t)})\}_{t=0,1,\cdots}$ is the KKT solution to the problem ($\mathbf{P3}$) for any feasible starting point $(\mathbf I^{(0)}, \mathbf f^{(0)}, \mathbf T^{(0)}, \mathbf S^{(0)})$.

\subsection{Two-Stage Initialization and Summary of Our Algorithm}
It is worth emphasizing that the SCA algorithm should be carefully initialized. An intuitive way to initialize $(\textbf{P}^{(t)})$ is to make all the edge devices directly connected to the edge server, i.e., to initialize $\mathbf I^{(0)}=[\mathbf 0_{K\times K};\mathbf 1_{K}]$, where $\mathbf 0_{K\times K}$ represents the $K\times K$ matrix with all-zero entries, and $\mathbf 1_{K}$ represents a $K$ column vector with all-one entries.
However, such an intuitive initialization easily leads to slow convergence of the SCA algorithm. Hence, we propose a two-stage initialization method to accelerate the convergence. Specifically, at the first stage, we initialize the solution by connecting all the devices to the edge server, i.e., $\mathbf I^{(0)}=[\mathbf 0_{K\times K};\mathbf 1_{K}]$. Accordingly, $\mathbf f$ is initialized as $\mathbf f^{(0)}=(f_{\min}+f_{\max})/2*\mathbf 1_K$, $\mathbf T$ and $\mathbf S$ are initialized as $t_{i,j}^{(0)^2}=\frac{I_{i,j}^{(0)}}{f_i^{(0)}}$ and $s_{i,j}^{(0)^2}=\frac{I_{i,j}^{(0)}}{\sum_{j\in\mathcal K^0}I_{i,j}^{(0)}r_{i,j}}$ for all $i\in\mathcal K$, $j\in\mathcal K$.
At the second stage, we solve a series of convex optimization problem $(\textbf{P}_{\text{init}}^{(t)})$ which is similar to $(\textbf{P}^{(t)})$, but removes the penalty term in the objective function of $(\textbf{P}^{(t)})$, using the initial point at the first stage,
\begin{align}
    (\textbf{P}_{\text{init}}^{(t+1)})\quad
    \min_{\mathbf I,\mathbf f,\mathbf T,\mathbf S} \quad &\sum_{i\in \mathcal K}\left(\kappa_i D_iN_{\mathsf{FLOP}}f_i^2+\frac{BP}{\sum_{j\in\mathcal K^0}r_{i,j}I_{i,j}}\right)\nonumber\\
    &+\mu\max_{i\in\mathcal K}\left\{\frac{D_iN_{\mathsf{FLOP}}}{f_i}+\frac{B}{\sum_{j\in\mathcal K^0}r_{i,j}I_{i,j}}\right\}\\
    \mathrm{s.t.} \quad
    &(\ref{eq:expand1}), (\ref{eq:expand2}), (\ref{eq:expand3}), (\ref{eq:node_topo_1}), (\ref{eq:node_topo_2}), (\ref{eq:continuousI}), \text{ and } (\ref{eq:freqconst})\nonumber.
\end{align}
We iteratively solve $(\textbf{P}_{\text{init}}^{(t)})$ until convergence, and the output of this problem is used as the initial point for solving $(\textbf{P}^{(t)})$.

As a summary, the penalty-based SCA optimization procedure to solve $(\mathbf {P3})$ is given in Algorithm 1.
\begin{algorithm}[t!]
    \caption{Penalty-based SCA algorithm for optimizing $\mathbf I$ and $\mathbf f$}\label{alg:SCA}
    \begin{algorithmic}[1]
    \State \textbf{\# INITIALIZATION}
    \State \textbf{Stage 1:}
    \State \textbf{Initialize} $\mathbf I^{(0)}=[\mathbf 0_{K\times K};\mathbf 1_{K}]$, $\mathbf f^{(0)}=(f_{\min}+f_{\max})/2*\mathbf 1_K$, $t_{i,j}^{(0)^2}=\frac{I_{i,j}^{(0)}}{f_i^{(0)}}$ and $s_{i,j}^{(0)^2}=\frac{I_{i,j}^{(0)}}{\sum_{j\in\mathcal K^0}I_{i,j}^{(0)}r_{i,j}}$ for all $i\in\mathcal K$, $j\in\mathcal K$.
    \State \textbf{Stage 2:}
    \State \textbf{Repeat}
    \State \qquad Update $\mathbf I^{(t+1)}$, $\mathbf f^{(t+1)}$, $\mathbf T^{(t+1)}$, $\mathbf S^{(t+1)}$ by solving \indent $(\textbf{P}_{\text{init}}^{(t+1)})$ via Mosek.
    \State \qquad$t\leftarrow t+1.$
    \State \textbf{Until} convergence.
    \State \textbf{Initialize} $\mathbf I_{\text{init}}=\mathbf I^{(t)}$, $\mathbf f_{\text{init}}=\mathbf f^{(t)}$, $\mathbf T_{\text{init}}=\mathbf T^{(t)}$, $\mathbf S_{\text{init}}=\mathbf S^{(t)}$.
    \State \textbf{\# OPTIMIZATION}
    \State \textbf{Repeat}
    \State\qquad Update $\mathbf I^{(t+1)}$, $\mathbf f^{(t+1)}$, $\mathbf T^{(t+1)}$, $\mathbf S^{(t+1)}$ by solving \indent $(\textbf{P}^{(t+1)})$ via Mosek.
    \State\qquad $t\leftarrow t+1.$
    \State \textbf{Until} convergence.
    \State \textbf{Output} $\mathbf I^*=\text{round}(\mathbf I^{(t)})$, $\mathbf f^*=\mathbf f^{(t)}$.
    \end{algorithmic}
\end{algorithm}
\begin{Remark}[Complexity analysis]\label{rem:complexity}
    For the SCA algorithm, each $(\textbf{P}^{(t+1)})$ involves $K(K+1)+K+2K^2=3K^2+2K$ primal variables and $3K^2+2K+1+K(K+1)+K=4K^2+4K+1$ dual variables. Therefore, the worst-case complexity for solving $(\textbf{P}^{(t+1)})$ is $\mathcal O\left((7K^2+6K+1)^{3.5}\right)$ \cite{BenTal2001cvx}. In turn, the total complexity for solving problem ($\mathbf{P3}$) is $\mathcal O\left(N_{\text{iter}}(7K^2+6K+1)^{3.5}\right)$, where $N_{\text{iter}}$ is the number of SCA iterations and its value is around $3\sim 5$ as shown in the simulation.
\end{Remark}

\section{Imitation-learning based Method}
In the last section, we have developed a penalty-based method to transform the original optimization problem into a continuous one, and exploit SCA to solve the continuous problem. As mentioned in Remark \ref{rem:complexity}, however, the complexity of the SCA algorithm is still high especially when the number of devices is large, which makes it impractical for real-time implementation. To address this challenge, we will propose an imitation-learning based method for faster topology and computing speed design.

Imitation learning is a machine learning paradigm that trains smart agents by learning from demonstrations \cite{Yu2020IE}. The core advantage of imitation learning comes from its offline training and online decision making manner. Thus, the trained DNN model can be efficiently applied to make real-time decisions. In this work, we adopt the optimal topologies and computing speeds with respect to various wireless channels as the demonstrations, which is generated by the penalty-based algorithm and collected as training data samples. Then, we use these high-quality demonstrations to train our imitation learning DNN model in an offline manner. Afterwards, the trained DNN model can be deployed either on the edge server or the edge devices to imitate the optimal decision pattern (topology and computing speed) and perform real-time inference.

\subsection{Imitation Learning DNN Design}
In this subsection, we elaborate the design of the DNN for imitation learning. In our considered system, the wireless channels are the parameters that change relatively fast. It might be prohibitive to run the penalty-based algorithm and make decisions every frame. The imitation learning DNN takes the wireless channel matrix $\mathbf H=\{H_{i,j}|i\in\mathcal K,j\in\mathcal K^0\}$ as the input, and its output consists of the topology matrix $\mathbf I$ and computing speed $\mathbf f$. However, directly designing a DNN that outputs $\mathbf I$ and $\mathbf f$ is impractical, since the the dimension of output $(K(K+1)+K)$ is comparable to that of the input features $(K(K+1))$. To this end, we design two DNNs for each edge device $i\in\mathcal K$: $\text{NN}_i^{\text I}$ which takes $\mathbf H$ as input, $j_i\in\mathcal K^0$ that represents the target receiver of node $i$ as output, and $\text{NN}_i^{\text f}$ which takes $\mathbf H$ as input, $f_i$ as output. An illustration of $\text{NN}_i^{\text I}$ and $\text{NN}_i^{\text f}$ is shown in Fig. \ref{fig:nns}, where Fig. \ref{fig:nni} shows the classification network $\text{NN}_i^{\text I}$ for computing node $i$'s target receiver node index $j_i$, and Fig. \ref{fig:nnf} shows a regression network $\text{NN}_i^{\text f}$ for obtaining the computing speed $f_i$ of node $i$.
\begin{figure*}[t!]
    \centering
    \includegraphics[width=\linewidth]{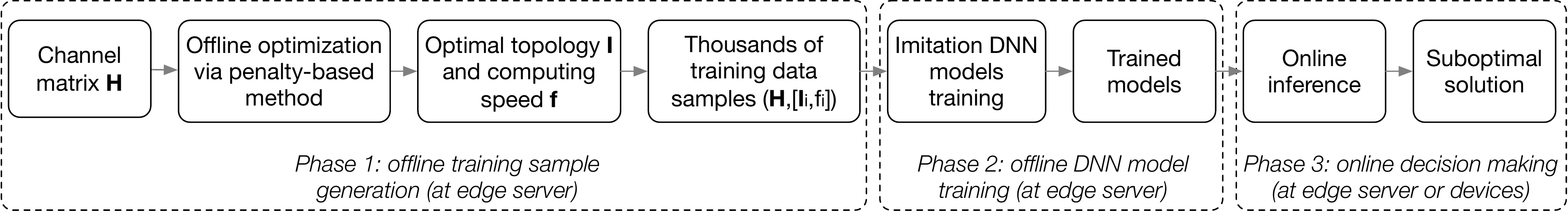}
    \caption{Workflow of the imitation-learning based framework.}
    \label{fig:IL_flowchart}
\end{figure*}
\begin{figure}
    \centering
    \begin{subfigure}[b]{0.45\textwidth}
      \centering
      \includegraphics[width=\textwidth]{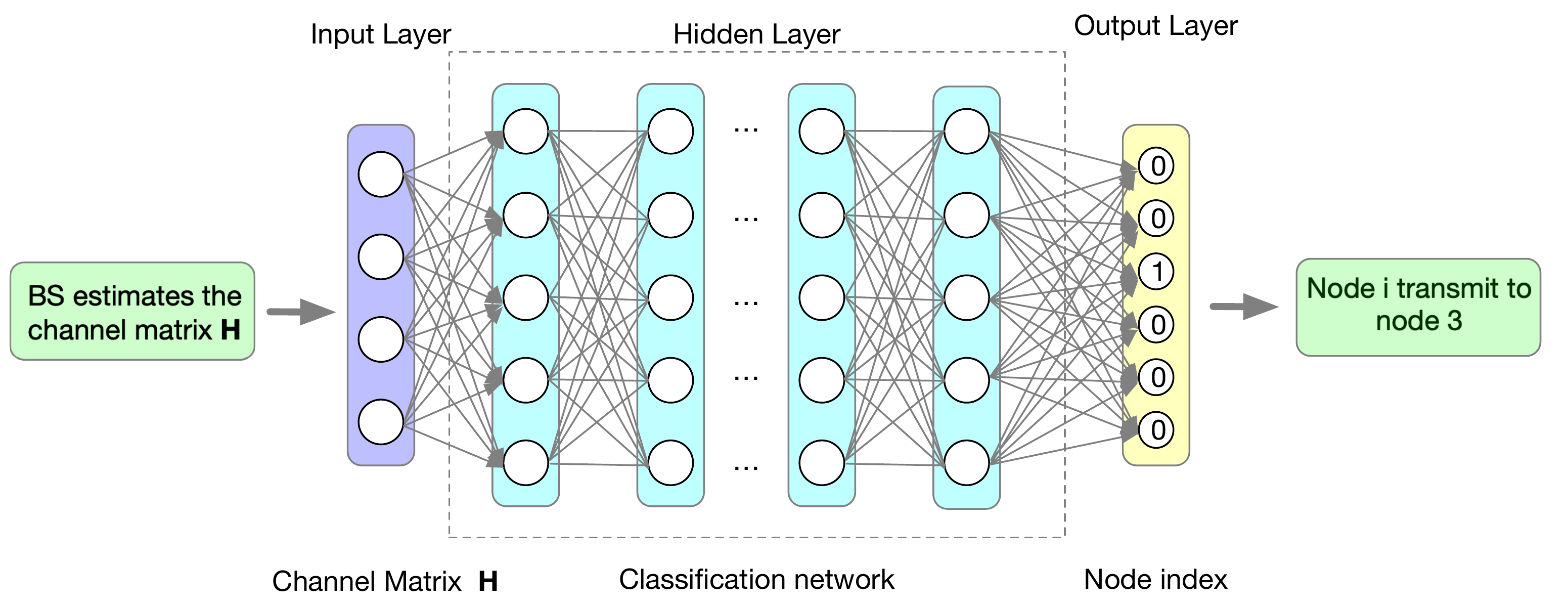}
      \caption{$\text{NN}_i^{\text{I}}$ outputting node index. }
      \label{fig:nni}
    \end{subfigure}
    \hfill
    \begin{subfigure}[b]{0.45\textwidth}
      \centering
      \includegraphics[width=\textwidth]{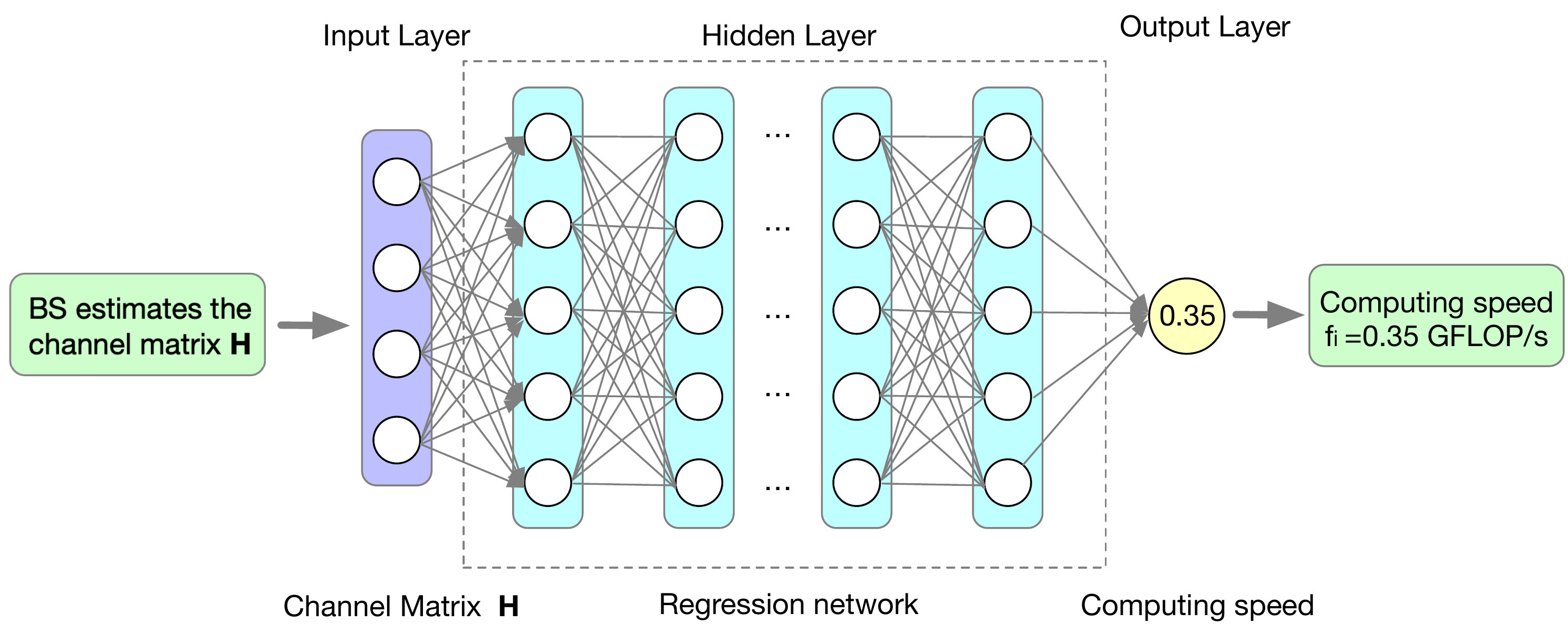}
      \caption{$\text{NN}_i^{\text{f}}$ outputting computing speed.}
      \label{fig:nnf}
    \end{subfigure}
    \caption{Illustration of the DNNs for imitation learning.}
    \label{fig:nns}
\end{figure}

\subsection{Workflow of Imitation Learning}\label{sec:workflow}
Fig. \ref{fig:IL_flowchart} gives a detailed illustration of the workflow of our imitation-learning based method.
The imitation-learning based method consists of three phases: 1) \textit{Offline training sample generation}, 2) \textit{Offline imitation DNN training}, 3) \textit{Online decision making}, which are elaborated as follows.

1) \textit{Offline training sample generation:}
In this phase, we use the penalty-based method to generate training data samples $(\mathbf H, [\mathbf I_i,f_i])$, where $\mathbf I_i$ is the $i$-th row of $\mathbf I$ and is a one-hot vector indicating the index $j_i$ of node $i$'s target receiver. These training data samples are used as the demonstrations to train the imitation learning network.

2) \textit{Offline imitation DNN training:}
In this phase, we train the DNNs by feeding into the training samples generated in phase 1. For the classification network $\text{NN}_i^{\text{I}}$, we use a 4-layer fully connected DNN, where rectified linear units (ReLUs) are used as the activation functions of the hidden layers, and Softmax is used as the activation function of the output layer. For the regression network $\text{NN}_i^{\text{f}}$, we also use a 4-layer fully connected DNN, where Sigmoid activations are used for all the layers. Besides, cross-entropy loss \cite{Goodfellow2016DL} is used as the performance metric for the classification network $\text{NN}_i^{\text{I}}$, and MSE loss \cite{Goodfellow2016DL} is used as the performance metric for the regression network $\text{NN}_i^{\text{f}}$. Moreover, Adam optimizer \cite{Goodfellow2016DL} is used to optimize all the DNNs.

3) \textit{Online decision making:}
After the imitation DNN models are well-trained, they can be deployed to make real-time decisions. Thus, when the wireless channels change, we can feed the channel matrix $\mathbf H$ to the DNNs, which will directly output the target receiving node index and the computing speed of each device $i$ in an online manner.

\subsection{Distributed Implementation}
In our proposed imitation learning framework, we design a classification network $\text{NN}_i^{\text{I}}$ and a regression network $\text{NN}_i^{\text{f}}$ for each device $i$. The trained $\text{NN}_i^{\text{I}}$ and $\text{NN}_i^{\text{f}}$ can be deployed at the local devices to perform distributed inference.
The integrated framework for the federated learning system with optimized topology and computing speed is summarized in Algorithm \ref{alg:distributed}.

\begin{algorithm}[t]
    \caption{Imitation-learning based distributed implementation of TOFEL }\label{alg:distributed}
    \begin{algorithmic}[1]
    \State Edge server initializes the global model $\mathbf w^{(0)}$, and broadcasts $\mathbf w^{(0)}$ and $\mathbf H^{(0)}$ to the edge devices.
    \State \textbf{for} $t=0,1,\cdots$, \textbf{do}\\
    \qquad \textbf{for} node $i\in\mathcal K$ in parallel \textbf{do}
    \State \qquad\qquad Node $i$ chooses a node $j_i$ for transmission, and \indent\indent decide its computing speed $f_i$ by feeding $\mathbf H^{(t)}$ \indent\indent to $\text{NN}_i^{\text{I}}$ and $\text{NN}_i^{\text{f}}$.\\
    \qquad\qquad Node $i$ computes  $\mathbf g_i^{(t)}=\nabla F_i(\mathbf w^{(t)})$ with \indent\indent CPU  speed $f_i$, and aggregates the gradi-\indent\indent ents  from the nodes $k$ with $I_{k,i}=1$, i.e., \indent\indent $\tilde{\mathbf g}_i=\frac{\sum_{k\in\mathcal K}I_{k,i}\tilde D_k\tilde{\mathbf g}_k+D_i\mathbf g_i}{\sum_{k\in\mathcal K}I_{k,i}\tilde D_k+D_i}$.\\
    \qquad\qquad Node $i$ transmits its aggregated gradient $\tilde{\mathbf g}_i$ to \indent\indent node $j_i$. \\
    \qquad\textbf{end for}\\
    \qquad Edge server aggregates the gradient vectors from \indent node $i\in\{i|I_{i,K+1}=1\}$ to obtain $\mathbf g^{(t)}$.\\
    \qquad Edge server performs gradient descent to update the \indent global model via $\mathbf w^{(t+1)}=\mathbf w^{(t)}-\eta\mathbf g^{(t)}$. \\
    \qquad Edge server estimates channel $\mathbf H^{(t+1)}$, and broadcasts \indent $\mathbf w^{(t+1)}$ and $\mathbf H^{(t+1)}$ to the devices.\\
    \textbf{end for}
    \State \textbf{Output} Global model $\mathbf w=\mathbf w^{(t)}$.
    \end{algorithmic}
\end{algorithm}

\begin{Remark}
Training the imitation learning network may be computation-intensive, especially generating the large amount of training data samples using the penalty-based method. Fortunately, the time-demanding sample generating and model training processes can be done in high performance servers in an offline manner. Then, the well-trained models can be efficiently deployed at edge devices for online decision making. Also note that imitation learning lies in the domain of supervised learning. The training and inference work in the same scenario. If we want to apply the trained models to a new scenario, say a new federated learning scenario with different numbers of edge devices, we may either train a new  model again, or resort to transfer learning, which is celebrated and powerful since  it is capable of applying the knowledge learned in previous scenarios to new related scenarios.
\end{Remark}

\begin{Remark}
	 As described in \ref{sec:workflow}-II, the topology and frequency are decided by different types of networks, i.e., a classification network and a regression network. Therefore, two separated DNNs are used in imitation learning. It is worth noting that the topology and frequency decisions output from the two DNNs can match well with each other once if the training dataset is sufficiently large so that it mimics the distribution of the ground-truth dataset well.
\end{Remark}

\section{Simulation and Experimental Results}
In this section, we evaluate the performance of our proposed TOFEL scheme, and compare it with benchmark schemes. Also, the effectiveness of our proposed penalty-based method and the imitation-learning based method are verified by simulations.
\begin{figure}[t]
    \centering
    \includegraphics[width=\linewidth]{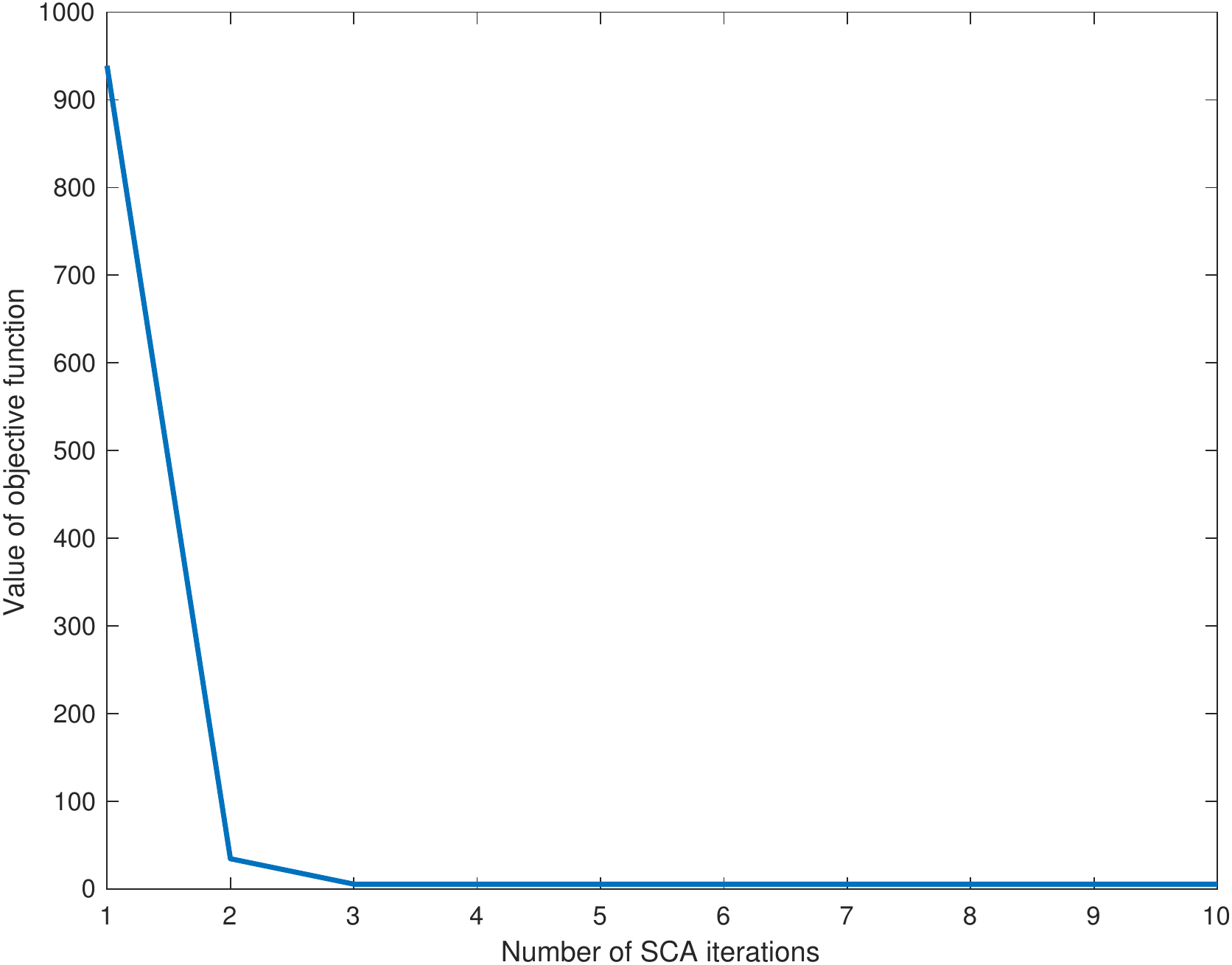}
    \caption{The convergence of the SCA algorithm.}
    \label{fig:SCAconverge}
\end{figure}
\subsection{Simulation Setup}\label{sec:setup}
The simulation settings are as follows unless otherwise specified. A set of $K$ edge devices are uniformly distributed within a cell with radius $200\,\mathrm{m}$. The devices' CPU coefficients $\kappa_i$ are set to $1\times 10^{-28}$. Each edge device is allocated a spectrum band $w_i=180\,\mathrm{KHz}$. Also, the pathloss constant $g_0=-30\,\mathrm{dB}$, the pathloss coefficient $\alpha =3.2$, and the reference distance $d_0=1\,\mathrm{m}$. The noise power is $\sigma^2=10^{-9}\,\mathrm{W}$. The transmit power of the edge devices is $P=100\,\mathrm{mW}$. For the federated learning task, we consider the handwritten digit recognition in our experiment. The celebrated MNIST dataset \cite{MNIST} is used for training a convolutional neural network (CNN). For the MNIST dataset, it consists of $70000$ grayscale images (a training set of $60000$ examples and a test set of $10000$ examples) of handwritten digits, each with $28\times 28$ pixels. Thus, each image needs $28\times 28\times 8+4 = 6276\, \mathrm{bits}$, and $N_{\mathrm{FLOP}}=6276\times 5=31380\,\mathrm{FLOP/sample}$, where we have assumed $5\,\mathrm{FLOP}$s are required to process one-bit of input data. For the CNN, it consists of a $5 \times 5$ convolution layer (with ReLu activation, $32$ channels), a $2 \times 2$ max pooling layer, another $5 \times 5$ convolution layer (with ReLu activation, $64$ channels), a $2 \times 2$ max pooling layer, a fully connected layer with $128$ units (with ReLu activation), and a final softmax output layer (with $10$ outputs). The communication load for transmitting a gradient vector is approximately set as $B=10\,\mathrm{kb}$. Moreover, the computing speed of each edge device may be adjusted between $0.1 \,\mathrm{GFLOP}$s and $1 \,\mathrm{GFLOP}$s. The weighting factor between energy consumption and latency is $\mu=0.5$.

\subsection{Evaluation of Penalty-Based Method}

\textbf{Convergence}. We evaluate the penalty-based SCA method with the simulation setup described above. We first show the convergence of the SCA algorithm via simulation. In this simulation, we set the number of edge devices as $K=5$ unless otherwise specified. The penalty parameter is set as $\beta = 0.0001$. Fig. \ref{fig:SCAconverge} shows the value of the objective function (\ref{eq:obj}) is monotonically decreasing in consecutive SCA iterations, and converges after around $4$ iterations.

\begin{figure}[t]
    \centering
    \begin{subfigure}[b]{0.45\textwidth}
      \centering
      \includegraphics[width=\textwidth]{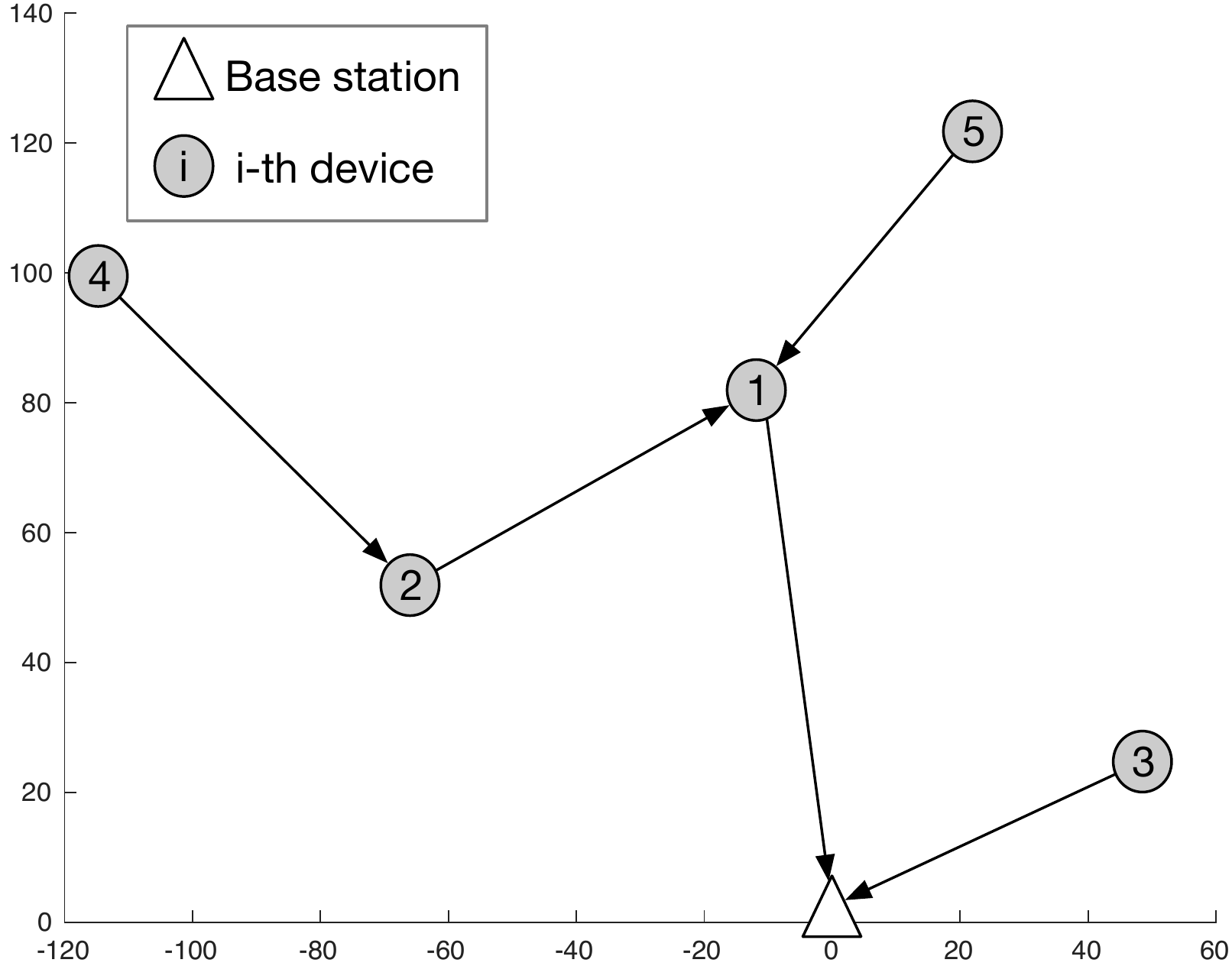}
      \caption{Optimized topology for $5$ devices. }
      \label{fig:topo5}
    \end{subfigure}
    \hfill
    \begin{subfigure}[b]{0.45\textwidth}
      \centering
      \includegraphics[width=\textwidth]{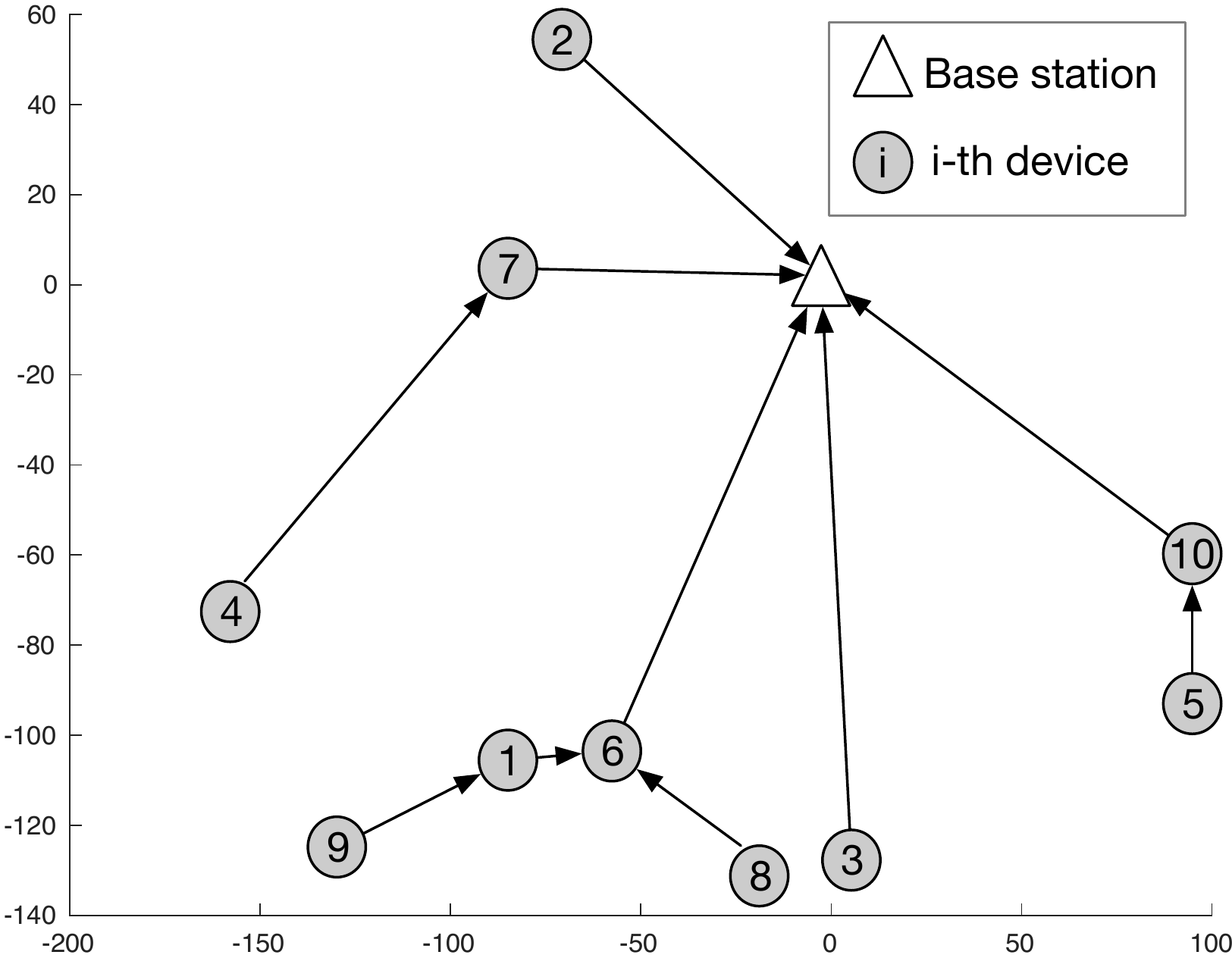}
      \caption{Optimized topology for $10$ devices.}
      \label{fig:topo10}
    \end{subfigure}
    \caption{A realization of the optimized topology of TOFEL systems.}
    \label{fig:topo}
\end{figure}

\textbf{Optimized topology}. In our TOFEL scheme, the topology of the federated learning network is optimized by problem ($\mathbf{P1}$), and a realization of the optimized topology of TOFEL is shown in Fig. \ref{fig:topo}, where Fig. \ref{fig:topo5} and Fig. \ref{fig:topo10} show the optimized topologies of TOFEL systems with $5$ and $10$ edge devices, respectively. It can be seen that unlike conventional FEEL or the hierarchical FEEL with fixed hierarchy as in \cite{Liu2020HFL,Luo2020HFL}, the topology of our proposed TOFEL framework is quite flexible. Any node in the system can be regarded as a candidate node to perform gradient aggregation and forward, and the connection between an edge device and the edge server may span multiple hops. For example, as shown in Fig. \ref{fig:topo5}, the gradient vector of node 4 is first transmitted to and aggregated at node 2, followed by a further aggregation at node 1, and the aggregated gradient vector at node 1 is then transmitted to the edge server for final aggregation. Thus, the connection between node 4 and the edge server contains $3$ hops. This kind of flexibility  allows the federated learning system to better exploit the heterogeneity the wireless channels and the devices' resources. Moreover, since congestion occurs if too many nodes transmit model updates simultaneously to an AF node, we define the degree of a topology as the maximum number of forward links connected to one AF node (similar to the degree of a graph, which is defined as the maximum number of edges connecting to a single vertex). For example, the degrees of the optimized topologies of TOFEL in Fig. \ref{fig:topo5} and Fig. \ref{fig:topo10}  are $2$ and $5$, while they increase to $5$ and $10$ if conventional FEEL is used. Therefore, the proposed topology optimization also alleviates the congestion issue effectively, which is also reflected as the reduced communication latency shown in Fig. \ref{fig:power}.

\begin{figure}[t]
	\centering
	\includegraphics[width=\linewidth]{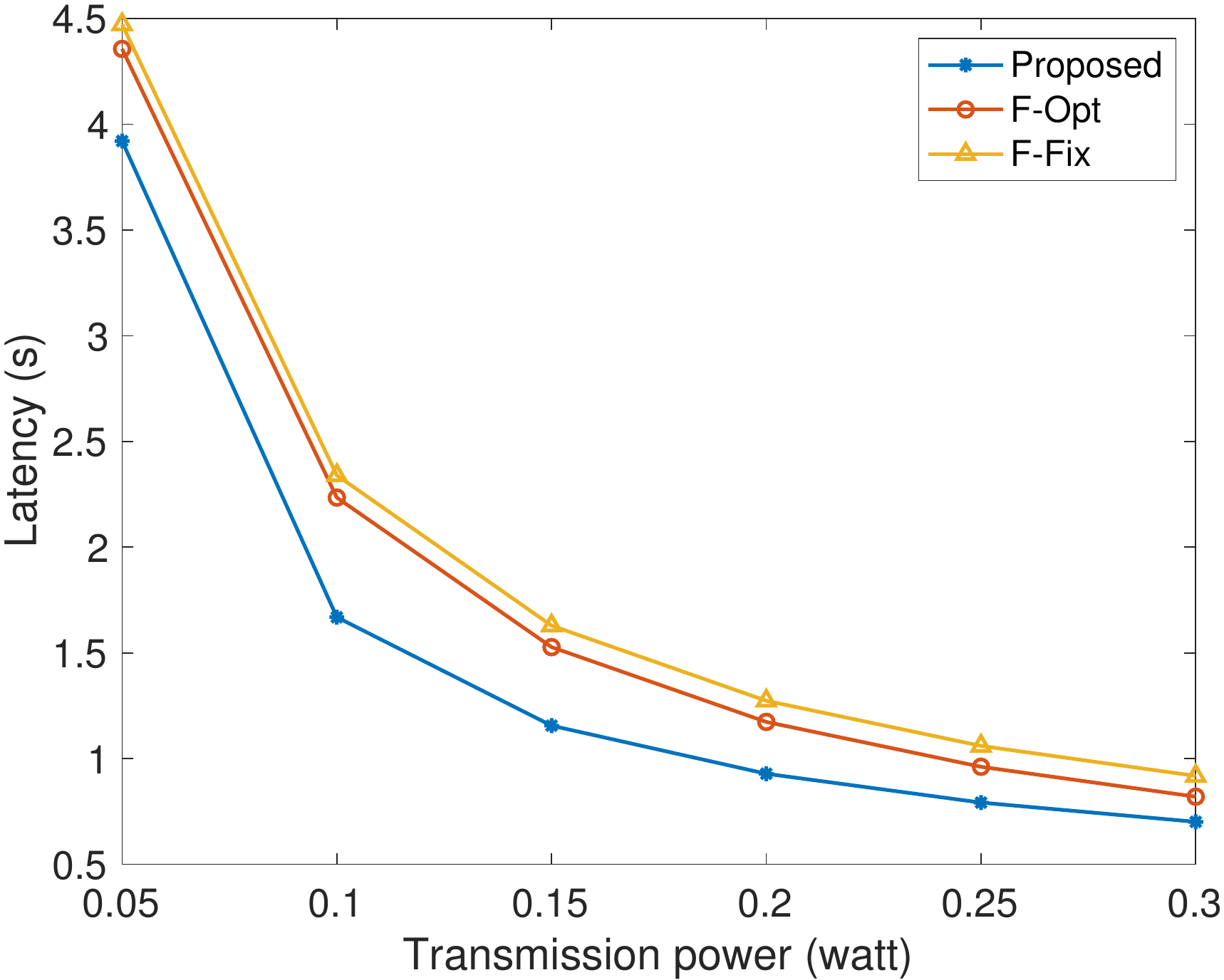}
	\caption{Latency for one-round federated learning iteration.}
	\label{fig:latency}
\end{figure}
\begin{figure}[t]
	\centering
	\includegraphics[width=\linewidth]{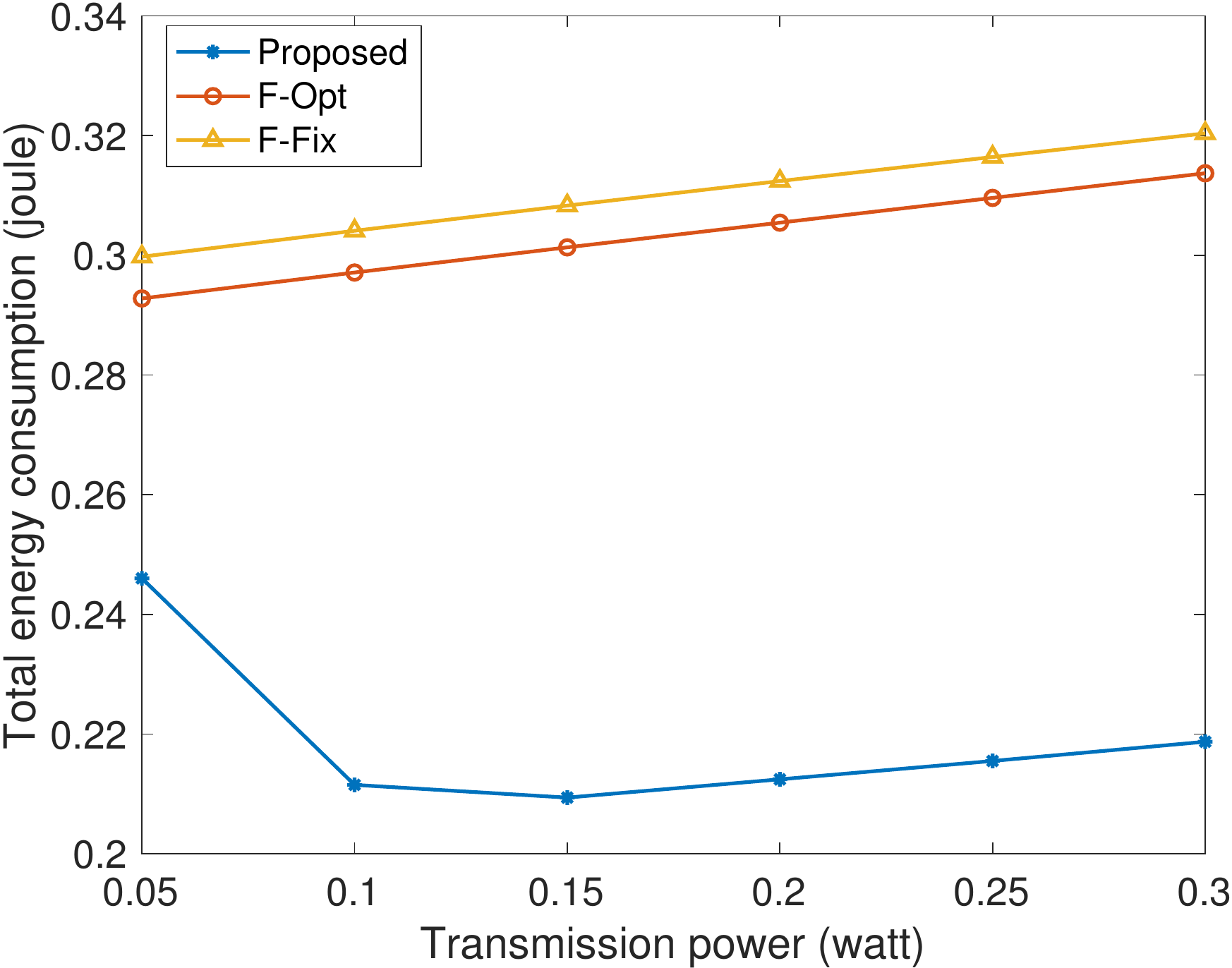}
	\caption{Total energy consumption for one-round federated learning iteration.}
	\label{fig:power}
\end{figure}

\textbf{Comparison with benchmarks}. To demonstrate the performance gain in terms of energy reserving and latency reduction of our proposed TOFEL scheme, we compare it  with two benchmarks.

1) \textit{Flat topology with fixed computing speed} (F-Fix).
In this scheme, all the edge devices directly transmit their local gradients to the edge server for aggregation, and the computing speeds of all the edge devices are set at medium level, i.e., $f_i=(f_{\min}+f_{\max})/2$. Thus, the latency and energy consumption for one-round iteration of federated learning can be directly calculated from equations (\ref{eq:latency}) and (\ref{eq:energy}).

2) \textit{Flat topology with optimized computing speed} (F-Opt). Similarly, in this scheme, all the edge devices directly transmit their local gradients to the edge server for aggregation, but the computing speed of the edge devices are optimized by the following problem:
\begin{align*}
    \mathbf{P^{\text{flat}}}\quad
    \min_{\mathbf f} \quad &\sum_{i\in \mathcal K}\left(\kappa D_iN_{\mathsf{FLOP}}f_i^2+\frac{BP}{r_{i,K+1}}\right)\nonumber\\
    &+\mu\max_{i\in\mathcal K}\left\{\frac{D_iN_{\mathsf{FLOP}}}{f_i}+\frac{B}{r_{i,K+1}}\right\}\\
    \mathrm{s.t.} \quad
    &f_{\min}\leq f_i\leq f_{\max},\, i\in\mathcal K.
\end{align*}
It is easy to verify that this problem is convex, and thus can be efficiently solved via off-the-shelf CVX toolbox such as Mosek.

Fig. \ref{fig:latency} shows the latency of one-round federated learning iteration with different transmit power of the edge devices for our proposed TOFEL scheme and the above mentioned two benchmark schemes. It can be observed that with the increase of the transmit power, the latency of all the schemes decreases. This is intuitive since increasing the transmit power will increase the communication rate. Also can be seen is that optimizing the computing speed of the edge devices can slightly reduce the latency compared with that of fixed computing speed. However, with optimized topology, the latency reduction is much more remarkable, which demonstrates the feasibility and necessity of topology optimization for the federated learning system.

Fig. \ref{fig:power} shows the total energy consumption of one-round federated learning iteration with different transmit power for different schemes. It can be seen that the energy consumption of our proposed scheme is significantly lower than that of the benchmark schemes. Moreover, with the increase of transmit power, the total energy consumptions monotonically increase for the two benchmark schemes. However, the total energy consumption of our proposed scheme will first decrease and then slowly increase. The reasons are two-fold. On one hand, increasing the transmit power will shorten the transmission time. Thus, increasing the transmit power will not necessarily lead to the increase of communication energy consumption. On the other hand, the total energy consumptions is more dominated by communication for the two benchmarks due to the straggler effect, while in our proposed scheme, there exists a trade-off between reducing communication and computation energy consumptions. In this case, increasing the transmit power leads to less communication time and more computation time, which in turn reduces the computation energy.  Hence, if the reduction of computation energy consumption is larger than the increase of communication energy consumption, the total energy consumption would be saved.

To see the impact of $\beta$ on the penalty-based method, we simulate the case of $K=5$ and the cost function value versus the value of $\beta$ is shown in Fig.~9. The effect of different choices of $\beta \in (0,1]$ is concluded as below.
    \begin{itemize}
    	\item When $\beta$ increases from $0$ to $1$, the problem (${\bf P2}$) becomes more distinct from (${\bf P1}$).
    	\item When $\beta$ reduces from $1$ to $0$, (${\bf P2}$) becomes more similar to (${\bf P1}$).
    	\item For the extreme case where $\beta \to 0$, (${\bf P1}$) and (${\bf P1}$) are equivalent.
    \end{itemize}

\begin{figure}[!t]
\centering
\includegraphics[width=0.45\textwidth]{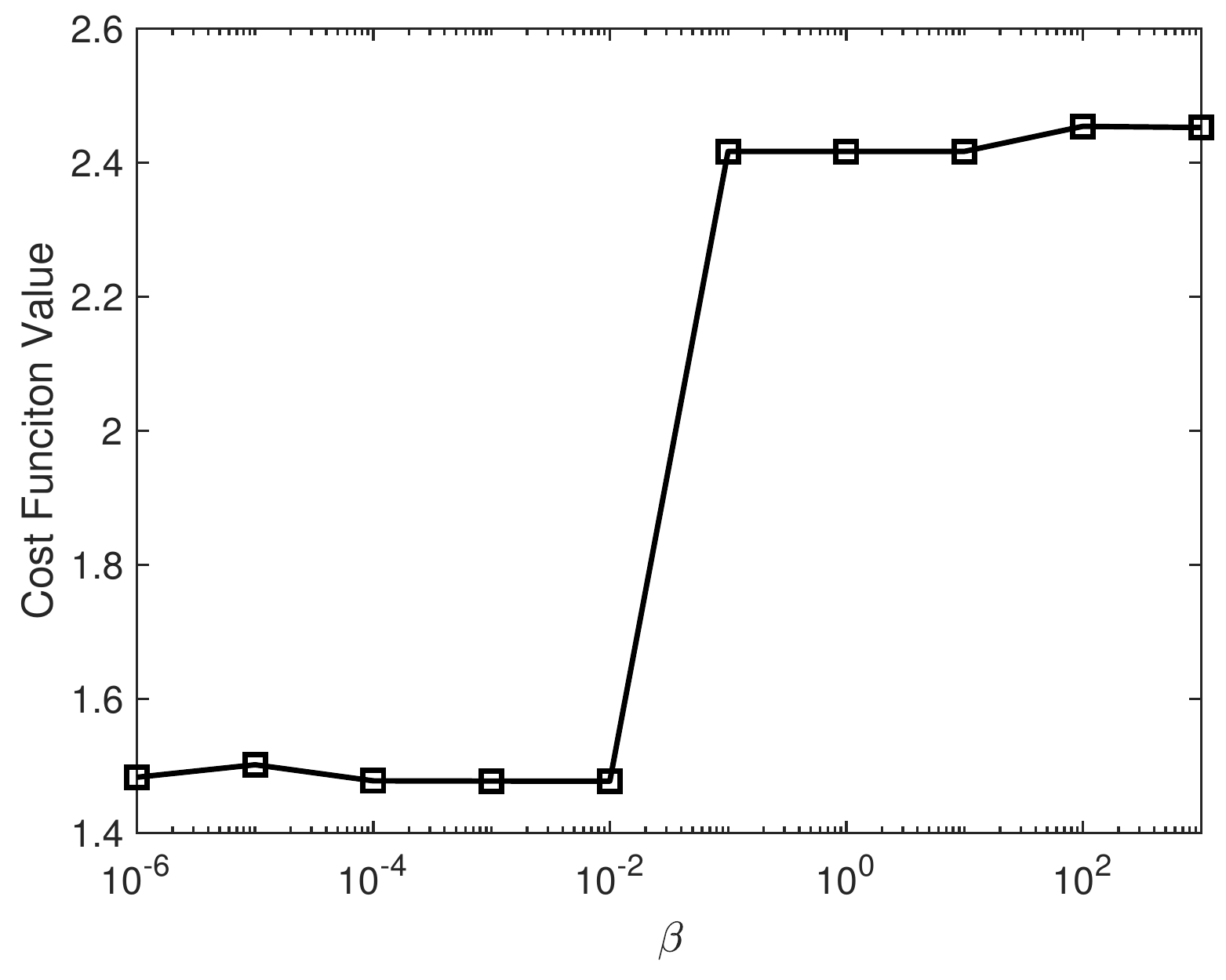}
\caption{Cost function value versus $\beta$ when $K=5$.}
\label{fig_sim}
\vspace{20pt}
\end{figure}

To further save the communication latency, we consider the channel aware scheduling policy which avoids selecting stragglers with weak channels.
Fig.~10 compares the TOEFL schemes with and without device selection when $K=5$.
It can be seen that the proposed TOFEL framework with device selection automatically removes two stragglers far from the base station, thereby significantly reducing the cost, delay, and power.
But this also leads to a smaller number of participating devices in the federated learning group, which may in turn degrade the learning performance.
In the perfect case when the datasets at stragglers do not provide extra information compared with datasets at other users, the device selection scheme can save up to $40\%$ computation power and communication delay as shown in Fig.~10 while guaranteeing the same federated learning performance.

\begin{figure}[!t]
\centering
\includegraphics[width=0.45\textwidth]{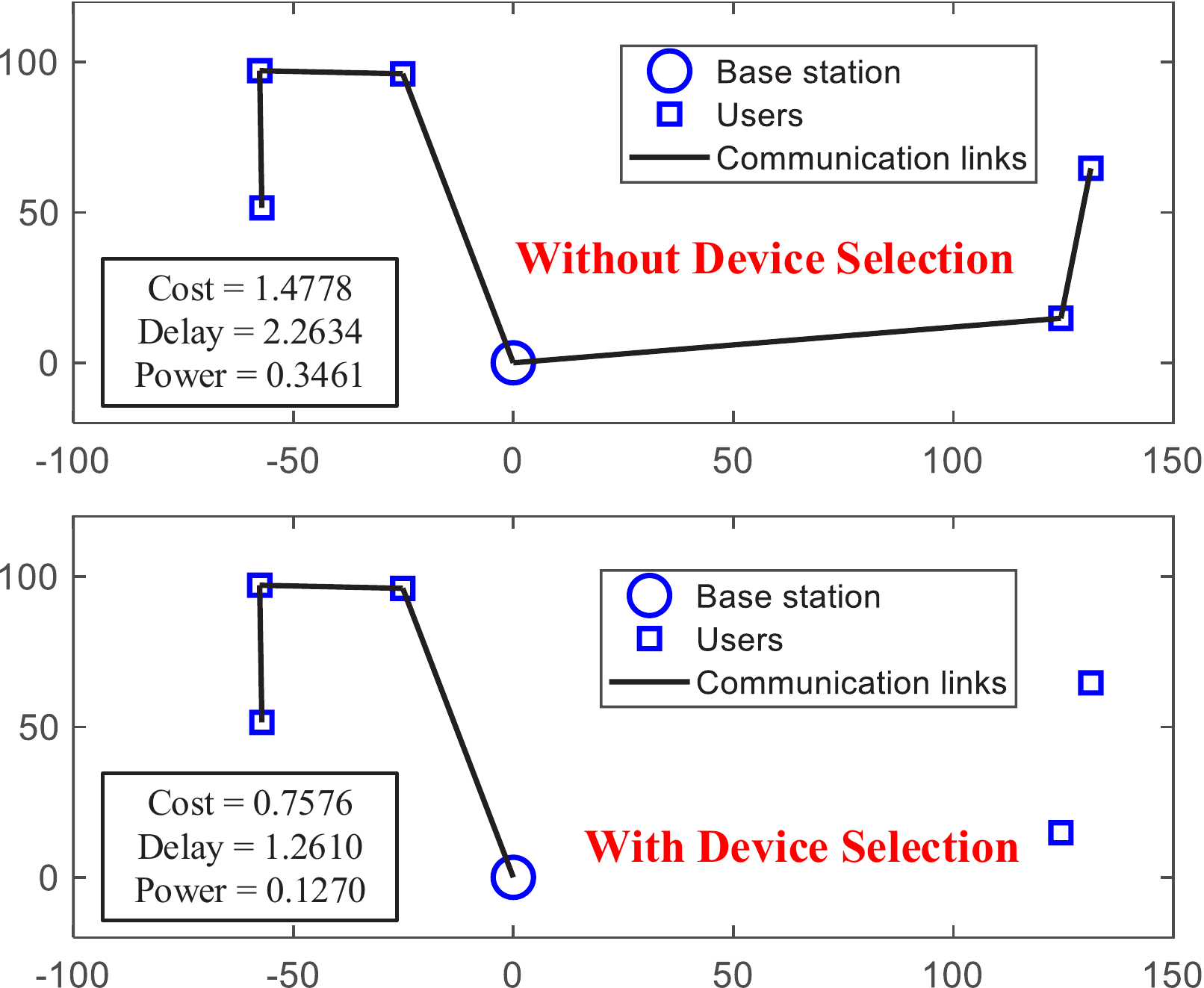}
\caption{Comparison between TOEFL schemes with and without device selection when $K=5$.}
\label{fig_sim2}
\vspace{20pt}
\end{figure}

\subsection{Evaluation of imitation-learning based Method}
\textbf{Training data generation and cleaning}. First of all, we generate $4000$ training samples using the penalty-based SCA method with the simulation setup described in Section \ref{sec:setup}, out of which $3000$ samples are used for training and $1000$ are used for testing. Before feeding the training data into the imitation networks, the data are preprocessed to be more suitable for DNN training. Specifically, we first remove the diagonal entries of the channel matrix $\mathbf H$ since a node cannot transmit to itself and vectorize $\mathbf H$, followed by a normalization process, such that the entries of the processed $\mathbf H$ are all between $0\sim 1$.

\textbf{Imitation DNN design}. As mentioned in Section \ref{sec:workflow}, we use fully-connected networks as the imitation learning networks. Specifically, each $\text{NN}_i^{\text{I}}$ consists of an input layer with $25$ units, two hidden layers, each with $256$ units and each unit with a ReLU activation, and an output layer with Softmax activation. $\text{NN}_i^{\text{I}}$'s are trained with cross-entropy loss and Adam optimizer with learning rate $0.001$. The batch size is set as $64$. Moreover, each $\text{NN}_i^{\text{f}}$ consists of an input layer with $25$ units, a hidden layer with $32$ units and Sigmoid activations, a hidden layer with $16$ units and Sigmoid activations, followed by an output layer. $\text{NN}_i^{\text{f}}$'s are trained with MSE loss and Adam optimizer with learning rate $0.001$. Also, the batch size is set as $64$.

\textbf{Performance of imitation learning}. All the imitation learning DNNs are trained $100$ epochs. Fig. \ref{fig:imitationlearning} shows the learning performance of the imitation learning DNNs of device 1. The learning results of other devices are similar and thus omitted here for conciseness. Fig. \ref{fig:classify} shows the training and test accuracy for $\text{NN}_1^{\text{I}}$, which is a classification network outputting the target receiving node index $j_1$ of node 1. It can be seen that $\text{NN}_1^{\text{I}}$ achieve around $95\%$ training accuracy and $90\%$ test accuracy. Fig. \ref{fig:regress} shows the training and test losses for $\text{NN}_1^{\text{f}}$, which is a regression network outputting the computing speed of device 1. It can be observed that $\text{NN}_1^{\text{f}}$ achieves a training loss of $0.01$ and test loss of $0.05$. Moreover, we record the running time of the penalty-based method and imitation-learning based method. The running time of penalty-based method is around $148\,\mathrm{s}$, while the inference time of the imitation-learning based method is just about $7.5\,\mathrm{ms}$. We can see that the imitation-learning based method can reduce the decision time by more than $10^{4}$ times. Notice that the inference time can be further reduced with GPU computing. Thus, imitation learning method can meet the requirement of real-time decision making.
\begin{figure}
    \centering
    \begin{subfigure}[b]{0.45\textwidth}
      \centering
      \includegraphics[width=\textwidth]{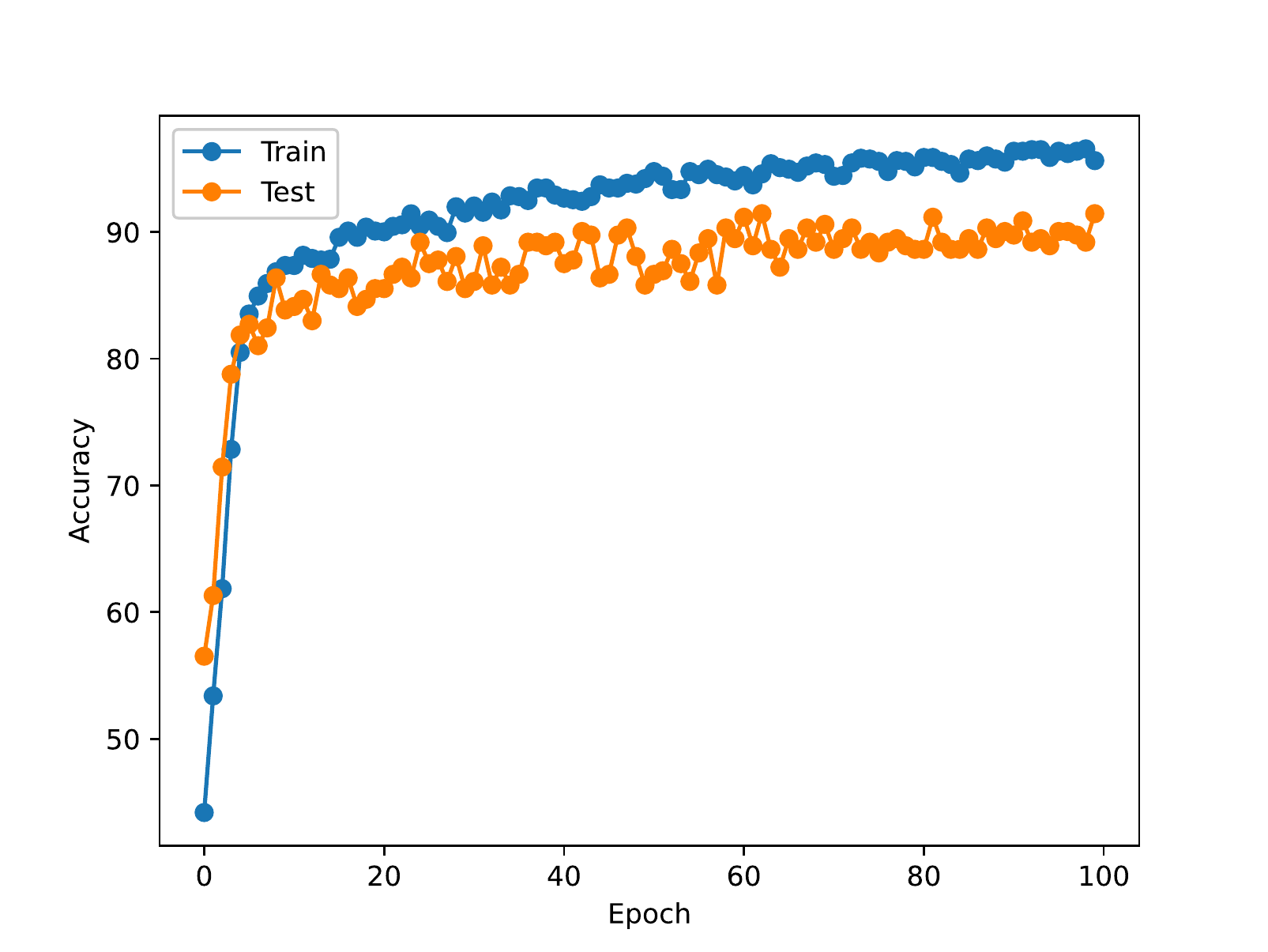}
      \caption{Training and test accuracy for $\text{NN}_1^{\text{I}}$. }
      \label{fig:classify}
    \end{subfigure}
    \hfill
    \begin{subfigure}[b]{0.45\textwidth}
      \centering
      \includegraphics[width=\textwidth]{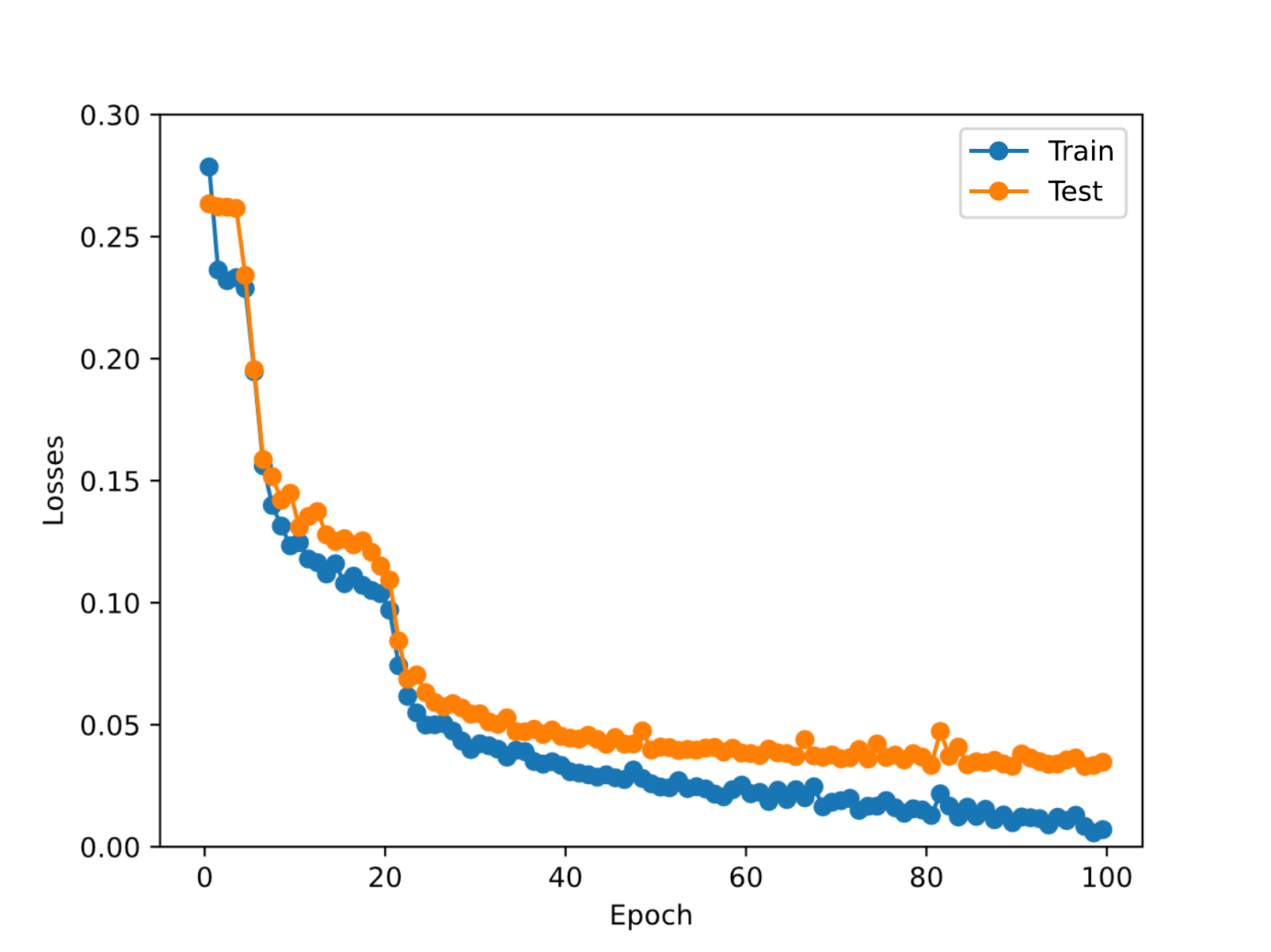}
      \caption{Training and test losses for $\text{NN}_1^{\text{f}}$.}
      \label{fig:regress}
    \end{subfigure}
    \caption{Accuracy and loss of the imitation learning DNNs.}
    \label{fig:imitationlearning}
\end{figure}

\subsection{TOFEL for Autonomous Driving}
\begin{figure}[t!]
    \begin{subfigure}{0.24\textwidth}
      \centering
      \includegraphics[width=\linewidth]{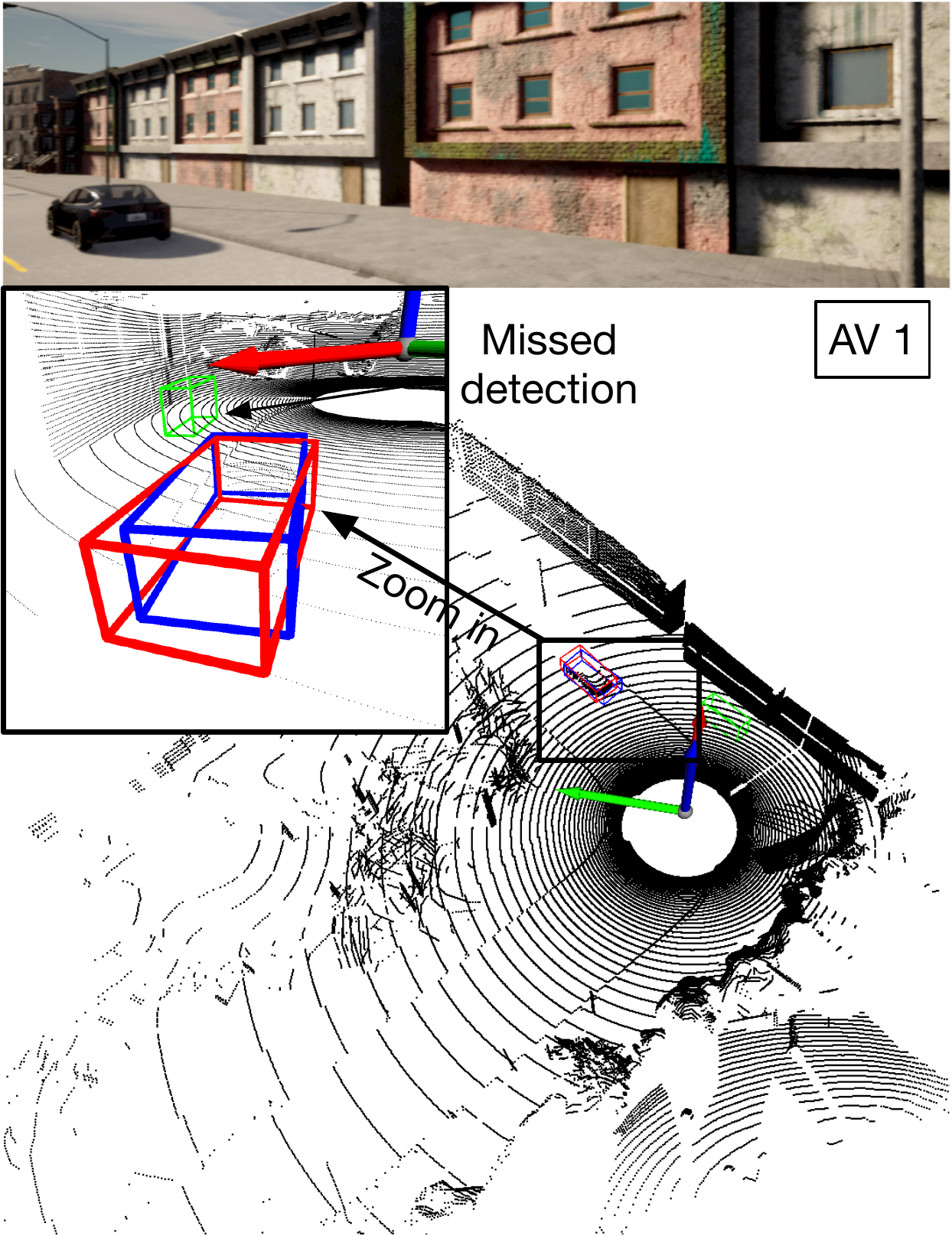}
      \caption{AV 1}
      \label{fig:av1}
    \end{subfigure}
    \begin{subfigure}{0.24\textwidth}
      \centering
      \includegraphics[width=\linewidth]{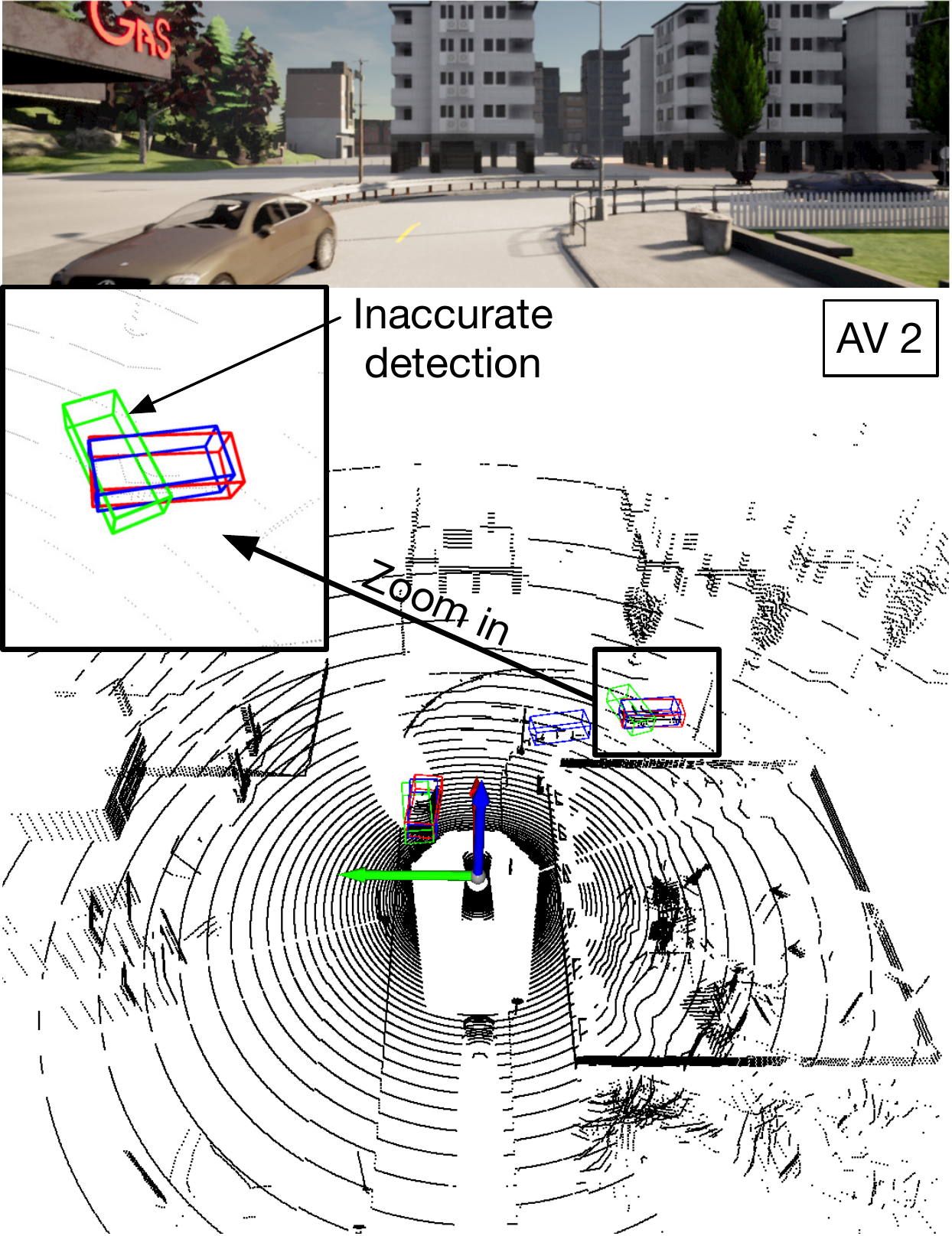}
      \caption{AV 2}
      \label{fig:av2}
    \end{subfigure}
    \begin{subfigure}{0.24\textwidth}
      \centering
      \includegraphics[width=\linewidth]{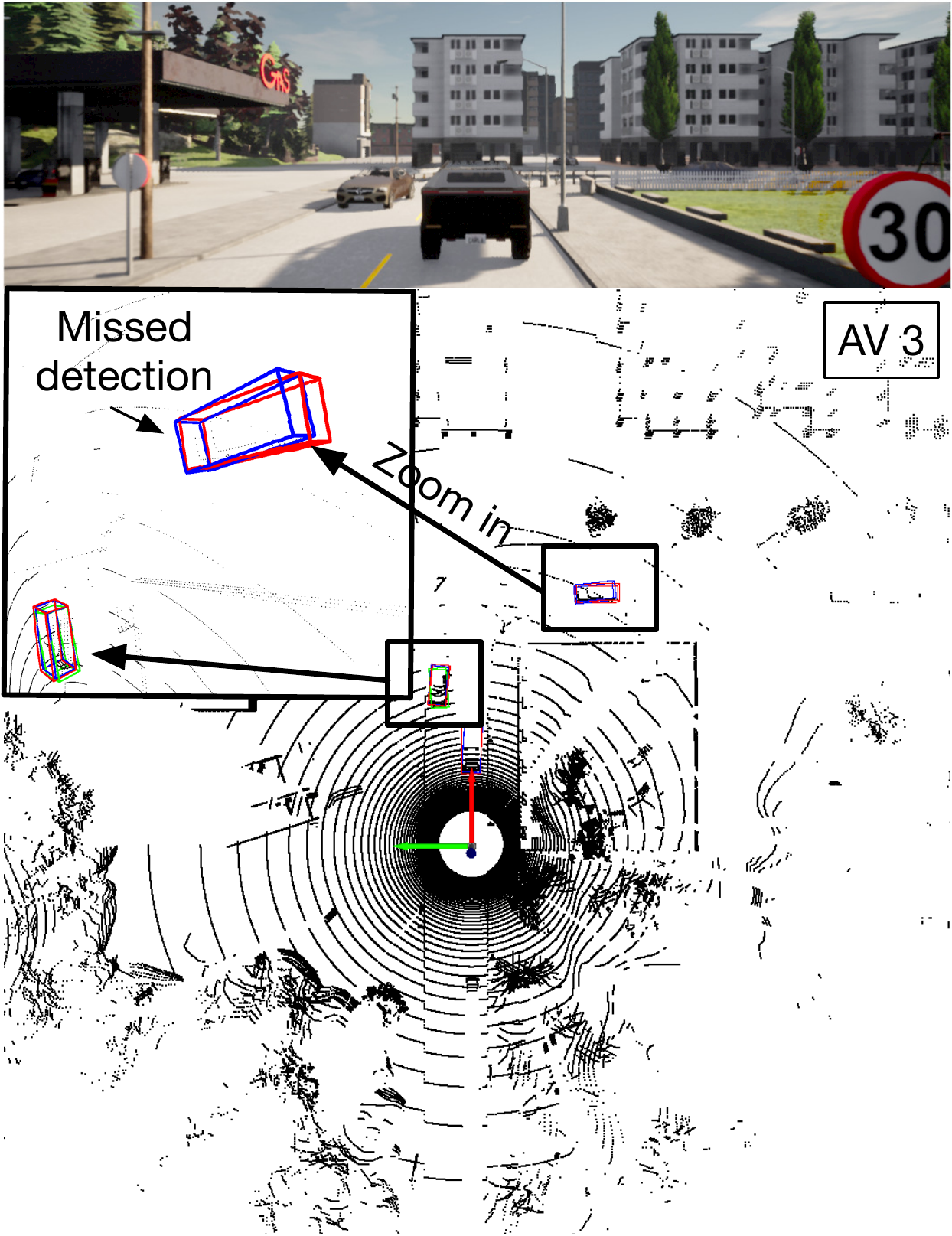}
      \caption{AV 3}
      \label{fig:av3}
    \end{subfigure}
    \begin{subfigure}{0.24\textwidth}
      \centering
      \includegraphics[width=\linewidth]{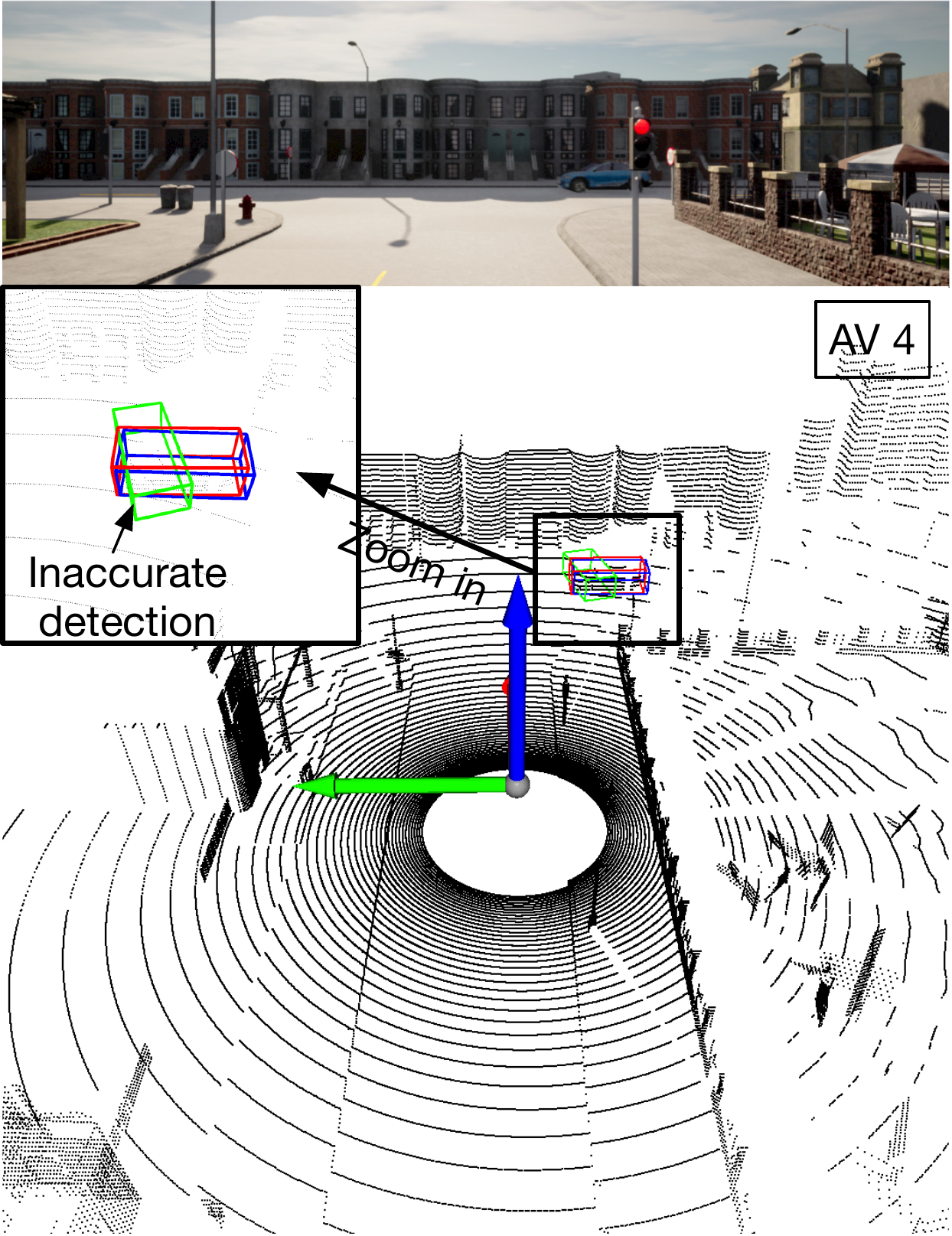}
      \caption{AV 4}
      \label{fig:av4}
    \end{subfigure}
    \caption{Federated detection results of $4$ AVs. The red box represents the ground truth; the blue box is the detection result with TOFEL; the green box is the result of conventional FEEL.}
    \label{fig:avs}
\end{figure}
To verify the effectiveness of our proposed TOFEL framework in more complex learning tasks, we conduct a series of experiments in the scenario of federated learning for 3D object detection in  autonomous vehicle (AV) systems. CARLA \cite{CARLA} is a widely-accepted unreal-engine driven benchmark system that features various urban driving scenarios and state-of-the-art 3D rendering such that TOFEL can be prototyped in virtual-reality. In this paper, all the training and testing procedures are implemented based on CARLA.

\textbf{Dataset}. We use CARLA to generate training and testing data samples. Specifically, we use ``Town02'' map \cite{CARLA} to generate $28$ vehicles, among which $4$ vehicles are AVs that can generate LiDAR point cloud data at the rate of $10\,\mathrm{frames/s}$, and these $4$ AVs perform federated learning to detect the 3D objects in the data frames. In our experiment, each AV generates $600$ frames of data, among which $100$ frames are used for training and the remaining frames are used for testing.
Note that in practice, to obtain the training labels of objects, each vehicle broadcasts its ego position and waits for messages from the nearby infrastructures.
Since the infrastructures are fixed at utility poles and connected to servers via wirelines, they have broader fields of views (FoVs) and deeper neural networks than those of vehicles.
Thus, their outputs are more accurate, which can be transmitted to vehicles via the vehicle-to-infrastructure (V2I) interface and adopted as pseudo labels (i.e., ground-truth labels with noises).

\textbf{DNN Model}. The sparsely embedded convolutional detection (SECOND) neural network \cite{Yan2018SECOND} is adopted for object detection. SECOND is a voxel-based neural network that converts a point cloud to voxel features, and sequentially feeds the voxels into two feature encoding layers, one linear layer, one sparse CNN and one RPN (as detailed in Figure 1 in \cite{Yan2018SECOND}). Notice that the raw data generated from CARLA are not directly compatible with SECOND. To address this issue, we develop a python-based data transformation module, such that the transformed dataset meets the KITTI standard \cite{Geiger2013IJRR,wang2021edge,zhang2021distributed}. The federated learning model training is implemented using PyTorch with python 3.8 on a Linux server with an NVIDIA RTX 3090 GPU.

\textbf{Performance evaluation}. In this part, we compare the performance of TOFEL with conventional FEEL framework. The SECOND network consists of around 5 million parameters. The model size is around $63.7\,\mathrm{Mb}$ as obtained from the experiment. Each frame of the training data is around $1.7\,\mathrm{Mb}$. Moreover, the bandwidth for each vehicle is set to $5\,\mathrm{MHz}$, and the transmit power $P=1\,\mathrm{W}$. The computing speed variation range is $1\sim 10\,\mathrm{GFLOPs}$. Substituting these parameters into problem ($\mathbf {P1}$) and ($\mathbf {P^{\text{flat}}}$), we obtain the latency of one-round iteration for the two schemes as $9.4\,\mathrm{s}$ and $12.1\,\mathrm{s}$, respectively. Given a deadline of $90\,\mathrm{s}$ for federated training, the detection results of the $4$ AVs for our proposed TOFEL scheme and conventional FEEL scheme are shown in Fig. \ref{fig:avs}. It can be observed that almost all the objects can be correctly detected with our proposed TOFEL scheme. In contrast, there are missed detections for AV 1 and AV3 as shown in Fig. \ref{fig:av1} and \ref{fig:av3}, and inaccurate detections (detected with wrong directions) for AV 2 and AV 4 as shown in Fig. \ref{fig:av2} and Fig. \ref{fig:av4}. This results come from the fact that TOFEL can achieve lower latency for one-round FL iteration. Thus, with a given deadline, more global FL iterations can be executed to improve the federated learning accuracy.

\section{Concluding Remarks}
This paper proposed a novel federated learning framework with optimizable topology. The joint design of the topology and computing speed was first solved by a penalty-based SCA method. To facilitate efficient implementation, a deep imitation-learning based framework was proposed to imitate the complex penalty method to achieve real-time decision making. Simulation results validated the effectiveness of our proposed algorithms. Also, it was demonstrated that our proposed TOFEL scheme can remarkably accelerate the federated learning process and reduce the energy consumption. Moreover, the proposed TOFEL scheme was verified in the scenario of federated 3D object detection for V2X autonomous driving.

At a higher level, this paper contributes to the new principle of exploiting hierarchical topology optimization to accelerate federated learning. The existing techniques to boost the efficiency of federated learning, such as gradient/model compression, heterogeneous local update, and power/ bandwidth resource allocation, can be effortlessly built upon our proposed TOFEL framework to further accelerate federated learning.

\appendix
\subsection{Proof of Proposition \ref{prop:equiv}.}
We prove it by mathematical induction. We first consider the case for a 2-tier tree topology. The aggregated gradient at the root node is given by
\begin{align}
    \tilde{\mathbf g}^{(2)}=\frac{\sum_{n=1}^{N_2+1}D_n\mathbf g_n}{\sum_{n=1}^{N_2+1}D_n}=\frac{\sum_{n=1}^{N_2+1}D_n\mathbf g_n}{D^{(2)}},
\end{align}
where $N_2+1$ is the total number of nodes in the system including the root node, and $D^{(2)}$ is the total number of data samples from all nodes in the 2-tier system.
Suppose the following equation satisfies for a $m$-tier tree topology:
\begin{align}
    \tilde{\mathbf g}^{(m)}=\frac{\sum_{n=1}^{N_m+1}D_n\mathbf g_n}{\sum_{n=1}^{N_m+1}D_n}=\frac{\sum_{n=1}^{N_m+1}D_n\mathbf g_n}{D^{(m)}}.
\end{align}
where $N_m+1$ is the total number of nodes in the system including the root node, and $D^{(m)}$ is the total number of data samples from all nodes in the $m$-tier system.
Then, for a system with $m+1$ tiers that consists of a root node and $K$ $m$-tier graphs, we have
\begin{align}
    &\quad\,\tilde{\mathbf g}^{(m+1)}\nonumber\\
    &=\frac{D_1^{(m)}\tilde{\mathbf g}_1^{(m)}+\cdots +D_K^{(m)}\tilde{\mathbf g}_K^{(m)}+D_{N_{m+1}+1}\mathbf g_{N_{m+1}+1}}{D_1^{(m)}+\cdots +D_K^{(m)}+D_{N_{m+1}+1}}\nonumber\\
    &=\frac{\sum_{n=1}^{N_m^1\!+\!1}\!D_n\mathbf g_n\!+\!\cdots \!+\! \sum_{n=1}^{N_m^K+1}\!D_n\mathbf g_n\!+\!D_{N_{m+1}+1}\mathbf g_{N_{m+1}+1}}{D^{(m+1)}}\nonumber\\
    &=\frac{\sum_{n=1}^{N_{m+1}+1}D_n\mathbf g_n}{D^{(m)}}.
\end{align}
Hence, it is established that with an arbitrary tree topology, the final aggregated gradient is equal to that with flat topology.

\subsection{Proof of Proposition \ref{prop:noring}}
We first prove there is no ring in the topology by contradiction. Suppose there is a ring ($N_i\to N_{i+1}\to\cdots\to N_{i+n-1}\to N_i$) with $n$ nodes in the topology. By equation (\ref{eq:CCtimeconst}), we have $\frac{D_{N_i}N_{\mathsf{FLOP}}}{f_{N_i}}\leq \frac{D_{N_{i+1}}N_{\mathsf{FLOP}}}{f_{N_{i+1}}}-\frac{B}{r_{N_i,N_{i+1}}}<\frac{D_{N_{i+1}}N_{\mathsf{FLOP}}}{f_{N_{i+1}}}$. Similarly, $\frac{D_{N_{i+1}}N_{\mathsf{FLOP}}}{f_{N_{i+1}}}<\cdots<\frac{D_{N_{i+n-1}}N_{\mathsf{FLOP}}}{f_{N_{i+n-1}}}$. Thus, we have $\frac{D_{N_i}N_{\mathsf{FLOP}}}{f_{N_i}}<\frac{D_{N_{i+n-1}}N_{\mathsf{FLOP}}}{f_{N_{i+n-1}}}$. However, since node $N_{i+n-1}$ transmits to node $N_i$, we have $\frac{D_{N_{i+n-1}}N_{\mathsf{FLOP}}}{f_{N_{i+n-1}}}\leq \frac{D_{N_i}N_{\mathsf{FLOP}}}{f_{N_i}}-\frac{B}{r_{N_{i+n-1},N_i}}<\frac{D_{N_i}N_{\mathsf{FLOP}}}{f_{N_i}}$. This causes contradiction. Hence, there is no ring in the topology. Moreover, equation (\ref{eq:node_topo_1}) guarantees that any node has and only has one parent node, and equation (\ref{eq:node_topo_2}) ensures the edge server to be the root node. Thus, the whole topology is a tree rooting from the edge server.

\bibliographystyle{IEEEtran}
\bibliography{reference.bib}

\end{document}